\documentclass[11pt]{article}

\usepackage{setspace}
\usepackage{lscape}
\usepackage{pdflscape}

\usepackage{amsmath,amssymb,amsthm}

\usepackage{wrapfig}

\usepackage{mathrsfs}
\usepackage{url}
\usepackage{latexsym}
\usepackage{bm} 

\usepackage{mathtools}

\usepackage{booktabs,multirow}
\usepackage{array}
\newcolumntype{P}[1]{>{\centering\arraybackslash}p{#1}}
\usepackage{diagbox}

\usepackage{algorithm}

\usepackage{thmtools,thm-restate}

\usepackage{hyperref}
\usepackage{cleveref}
\expandafter\def\csname ver@etex.sty\endcsname{3000/12/31}

\usepackage{autonum}
\makeatletter
\autonum@generatePatchedReferenceCSL{Cref}
\makeatother

\usepackage{tikz,tikz-3dplot}
\usetikzlibrary{positioning,shapes,shapes.callouts,shapes.multipart,arrows.meta,fit,calc}
\usepackage{pgfplots}
\pgfplotsset{compat=newest}
\usepackage{pxpgfmark}
\usetikzlibrary{bayesnet}

\usepackage{multicol}
\usepackage{wasysym}
\usepackage{adjustbox}
\usepackage{xparse}
\usepackage{pbox}
\usepackage{authblk}
\usepackage{xspace}

\usepackage[subrefformat=parens]{subcaption}
\usepackage[inline]{enumitem}
\hypersetup{
  colorlinks,
  linkcolor={green!35!black},
  citecolor={green!35!black},
  urlcolor={blue!50!black}
}

\usepackage{tabularray}

\DeclareMathOperator*{\argmax}{argmax}
\DeclareMathOperator*{\argmin}{argmin}

\DeclareMathOperator{\EE}{\mathbb{E}}

\DeclarePairedDelimiterX\inner[2]{\langle}{\rangle}{{#1},{#2}}
\DeclarePairedDelimiter\abs{|}{|}

\DeclarePairedDelimiter\set{\{}{\}}
\DeclarePairedDelimiter\prn{(}{)}
\DeclarePairedDelimiter\bra{[}{]}
\DeclarePairedDelimiterX\Set[2]{\{}{\}}{\mspace{2mu}{#1}\;\delimsize|\;{#2}\mspace{2mu}}
\DeclarePairedDelimiterX\Prn[2]{(}{)}{\mspace{2mu}{#1}\;\delimsize|\;{#2}\mspace{2mu}}
\DeclarePairedDelimiterX\Bra[2]{[}{]}{\mspace{2mu}{#1}\;\delimsize|\;{#2}\mspace{2mu}}

\newcommand{\R}{\mathbb R}

\renewcommand{\epsilon}{\varepsilon}

\NewDocumentCommand{\exsub}{s m O{} m}{%
  \IfBooleanT{#1}{\EE_{#2}\nolimits\bra*{#4}}%
  \IfBooleanF{#1}{\EE_{#2}\nolimits\bra[#3]{#4}}%
}

\newcommand{\memo}[1]{{\color{orange}}}

\newcommand{\dgamma}[2]{\gamma_{#1}(#2)}
\newcommand{\dGamma}[2]{\Gamma_{#1}(#2)}

\newcommand{\diff}[1]{{#1}'}

\newcommand{\mt}{\tilde{t}}

\newcommand{\sol}{0}

\declaretheoremstyle[
shaded={bgcolor=gray!15},
]{thmsty}

\declaretheorem[
  name=Theorem,
  refname={Theorem,Theorems},
  style=thmsty,
]{theorem}

\declaretheorem[
  name=Lemma,
  refname={Lemma,Lemmas},
  style=thmsty,
  sibling=theorem,
]{lemma}

\crefname{algorithm}{Algorithm}{Algorithms}
\crefname{line}{Line}{Lines}
\crefname{section}{Section}{Sections}
\crefname{appendix}{Appendix}{Appendices}
\crefname{table}{Table}{Tables}
\crefname{figure}{Figure}{Figures}
\crefname{equation}{}{}
\Crefname{equation}{Eq.}{Eqs.}

\setlist[itemize]{
  topsep=0.4\baselineskip,
  itemsep=0\baselineskip,
  leftmargin=1.5em,
}

\setlist[enumerate]{
  font=\upshape,
  label=(\alph*),
  ref=(\alph*),
  topsep=0.4\baselineskip,
  itemsep=0\baselineskip,
  leftmargin=2em,
}

\newlist{enuminasm}{enumerate}{1}
\setlist[enuminasm]{
  font=\upshape,
  label=(\alph*),
  ref=\theassumption(\alph*),
  topsep=0.4\baselineskip,
  itemsep=0\baselineskip,
  leftmargin=2em,
}
\crefalias{enuminasmi}{assumption}

\newlist{enuminthm}{enumerate}{1}
\setlist[enuminthm]{
  font=\upshape,
  label=(\alph*),
  ref=\thetheorem(\alph*),
  topsep=0.4\baselineskip,
  itemsep=0\baselineskip,
  leftmargin=2em,
}
\crefalias{enuminthmi}{theorem}

\newlist{enuminlem}{enumerate}{1}
\setlist[enuminlem]{
  font=\upshape,
  label=(\alph*),
  ref=\thelemma(\alph*),
  topsep=0.4\baselineskip,
  itemsep=0\baselineskip,
  leftmargin=2em,
}
\crefalias{enuminlemi}{lemma}

\usepackage[numbers, sort]{natbib}

\newcommand{\email}[1]{\href{mailto:#1}{\nolinkurl{#1}}}

\author[1]{Yasunori Akagi\footnote{Corresponding author. E-mail: \email{yasunori.akagi@ntt.com}}}
\author[2]{Hideaki Kim}
\author[3]{Daichi Fushihara}
\author[2]{Ryosuke Nakahama}
\author[2]{Hiroyasu Miyazaki}
\author[1]{Takeshi Kurashima}

\affil[1]{NTT Human Informatics Laboratories, Kanagawa, Japan}
\affil[2]{NTT Communication Science Laboratories, Kyoto, Japan}
\affil[3]{NTT Network Service Systems Laboratories, Tokyo, Japan}

\title{A Tractable Continuous-Time Model for Designing Interventions for Time-Inconsistent Agents}

\begin{document}
\maketitle

\begin{abstract}
Designing effective goals and rewards for time-inconsistent agents is a central problem in many long-term tasks, such as learning, exercise, work, and project completion. 
An agent may initially plan to complete a task, but later abandon it because, under non-exponential discounting, the perceived trade-off between immediate effort and delayed reward changes over time.
This paper develops a tractable continuous-time model for analyzing and designing interventions for such agents in deadline-constrained progress-based tasks. In the model, an agent repeatedly chooses a future progress trajectory that minimizes perceived cost and then follows its infinitesimal initial direction. Although this leads to a continuous-time dynamic behavior defined through a variational problem, we show that the resulting trajectory admits a concise analytical representation under generalized hyperbolic discounting, a broad class of discount functions that includes exponential and hyperbolic discounting as special cases. Using this representation, we characterize when the agent completes the task, abandons it immediately, or exhibits time-inconsistent abandonment after making partial progress. We then study two intervention design problems: optimal goal setting and optimal reward scheduling. For goal setting, we derive optimal goals both when exploitative rewards are allowed and when they are prohibited, and we identify conditions under which exploitative rewards are ineffective. For reward scheduling, we show that, for a fixed number of stages, equal-length periods and equal rewards are optimal, and that finer reward splitting monotonically improves final progress up to a discount-independent limit. These results provide a continuous-time framework for intervention design for time-inconsistent agents and clarify how optimal interventions differ from those in existing discrete-time models.

\end{abstract}

\section{Introduction}
Many long-term tasks require an external designer to choose goals, deadlines, and rewards for agents whose effort unfolds over time. Examples include a learning platform that sets milestones for a student, a company that offers bonuses for reaching sales targets, a health program that rewards exercise, or a project manager who divides a long project into intermediate deliverables. In such settings, the designer is not merely interested in predicting whether the agent will complete the task; the designer must also decide how to structure the task so that the agent makes as much progress as possible. This raises a basic intervention design problem: how should goals and rewards be chosen for agents who may fail to act according to their own long-term plans?

A central reason why such failures occur is that agents may evaluate present and future costs differently over time. This is usually described through \emph{time preferences}, or equivalently through a \emph{time discount function}, which specifies how an agent discounts costs and rewards that occur in the future \cite{frederick2002time}. In the standard exponential-discounting model, the relative value of two future outcomes is preserved as time passes, and the resulting preferences are dynamically consistent \cite{samuelson1937note}. By contrast, non-exponential discounting generally leads to \emph{time inconsistency}: a plan that appears optimal from an earlier viewpoint may no longer appear optimal at a later time \cite{strotz1955myopia,frederick2002time}. One of the most prominent sources of such behavior is \emph{present bias}, in which immediate costs and rewards receive disproportionately large weight relative to delayed ones \cite{phelps1968second,laibson1997golden}.

This phenomenon is especially important in deadline-constrained progress-based tasks. An agent may initially plan to complete a task because the delayed reward appears worth the required effort, but later abandon the task when the immediate cost of effort becomes more salient. For example, a person may plan to exercise regularly for a month in order to receive a health-related reward, but later skip training sessions as the immediate effort becomes more burdensome than it appeared at the planning stage. Similar patterns arise in studying, work, writing, and other cumulative-effort tasks. From the viewpoint of an intervention designer, the key question is not only why such abandonment occurs, but also how goals, deadlines, and rewards can be designed to mitigate it.

Time-inconsistent planning has therefore become an important topic at the interface of artificial intelligence, economics, and computation. A seminal computational model was introduced by \citet{kleinberg2014time}, who represented a task as a graph and modeled a present-biased agent as repeatedly choosing a path with minimum perceived cost. This model captures qualitative phenomena such as procrastination and abandonment, and has stimulated a line of work on interventions for time-inconsistent agents, including motivating subgraphs, penalties, and other forms of task design \cite{albers2019motivating,albers2021value,tang2017computational}. However, the general graph-based formulation is often computationally difficult, and many natural intervention design problems are NP-hard.

To obtain stronger tractability, Akagi et al. \cite{akagi2024analytically,akagi2026delta} studied \emph{progress-based tasks}. In a progress-based task, an agent accumulates a real-valued measure of progress over a time horizon and receives a reward if the progress reaches a prescribed goal. This class does not cover all sequential decision problems, but it captures an important family of cumulative-effort tasks, such as completing a thesis, accumulating study hours, exercising for a target duration, or making progress on a work project. The advantage of this restriction is that it makes it possible to derive closed-form characterizations of the agent's behavior and efficient solutions for intervention design problems such as optimal goal setting and reward scheduling. This tractability is important because it allows us to obtain explicit design rules rather than relying solely on numerical simulation or solving a computationally hard planning problem.

Existing tractable models of this kind, however, have two important limitations. First, they are formulated in discrete time. This requires the modeler to choose a time step, such as a day, a week, or a month. If the time step is too coarse, the model cannot represent fine-grained changes in motivation or effort. If the time step is too small, the state space and computational burden grow. Second, existing models typically focus on quasi-hyperbolic discounting or its special cases. Quasi-hyperbolic discounting is mathematically convenient in discrete time, but it represents only one particular form of present bias \cite{phelps1968second,laibson1997golden}. Human discounting behavior is often modeled by richer classes of discount functions, including hyperbolic discounting \cite{ainslie1975specious} and generalized hyperbolic discounting \cite{loewenstein1992anomalies,al2006note}. These limitations motivate the development of a continuous-time model that remains analytically tractable while accommodating a broader class of discount functions.

This paper develops such a model. Our first contribution is to propose a tractable continuous-time model for progress-based tasks performed by time-inconsistent agents.
We formulate the agent's behavior through a variational principle \cite{berdichevsky2009variational}. At each time, the agent chooses a future progress trajectory that minimizes perceived cost, taking into account both discounted effort cost and delayed reward. The agent then follows the infinitesimal initial direction of this perceived-cost-minimizing trajectory and repeats the same procedure continuously over time. This formulation provides a natural continuous-time analogue of existing time-inconsistent planning models, in which an agent repeatedly reoptimizes according to time-discounted preferences. At the same time, the continuous-time setting introduces a nontrivial mathematical challenge: the agent's realized behavior is defined through an infinite sequence of infinitesimal reoptimizations over future trajectories.

Our second contribution is a characterization of the agent's behavior under generalized hyperbolic discounting. Despite the apparent complexity of the continuous-time dynamics, we show that the agent's realized trajectory admits a concise analytical representation. Generalized hyperbolic discounting is a broad class of discount functions that includes exponential discounting as a limiting case and hyperbolic discounting as a special case \cite{loewenstein1992anomalies}. Under this class, we reduce the analysis of abandonment to the shape of a single function. This yields a necessary and sufficient characterization of three qualitatively different regimes: immediate abandonment, successful completion, and abandonment after partial progress. The last regime is precisely the time-inconsistent case in our model: the agent begins the task because completion appears worthwhile from the initial viewpoint, but later abandons it after reoptimizing from a later viewpoint. This characterization is particularly important for intervention design, because it identifies cases in which the agent initially finds completion worthwhile but later needs support to follow through on that plan.

Our third contribution is the analysis of optimal goal setting. Given a fixed time horizon and reward, the intervention designer chooses a goal that maximizes the agent's final progress. We study both non-exploitative and exploitative settings. In the non-exploitative setting, the goal must be attainable and the agent must receive the reward upon completion. In the exploitative setting, the designer is allowed to set an unattainable goal that induces progress even though the agent ultimately fails to obtain the reward. Such interventions are ethically problematic because they exploit the agent's time inconsistency, but analyzing them is important for understanding when and how institutional designs can take advantage of such behavior. We derive optimal goals in both settings and provide conditions under which exploitative rewards are ineffective.

Our fourth contribution is the analysis of optimal reward scheduling. Given a total time horizon and a total reward budget, the designer may divide the task into multiple stages, each with its own subgoal and reward. We show that, when the number of stages is fixed, the optimal schedule divides both time and reward equally across stages. We further prove that increasing the number of stages monotonically improves the agent's final progress, and that the limiting attainable progress is independent of the parameters of the discount function. This result suggests that sufficiently fine reward splitting can mitigate the effect of time discounting in continuous time. It also contrasts sharply with existing discrete-time models with quasi-hyperbolic discounting, where the optimal reward schedule can depend qualitatively on the discount parameters and may sometimes favor not subdividing the task \cite{akagi2024analytically,akagi2026delta}.

A preliminary version of this paper appeared in the Proceedings of AAAI'25 \cite{akagi2025continuous}. The present paper substantially extends that work in two directions. First, it broadens the class of discount functions from exponential and hyperbolic discounting to generalized hyperbolic discounting, which includes both as special cases. Second, it removes the restriction to the special cost parameter $\alpha=2$ and develops the analysis for arbitrary $\alpha>1$. These extensions lead to substantially more general characterizations of time-inconsistent behavior and optimal interventions. They are also technically nontrivial, requiring a careful analysis of the shape and convergence of functions defined through integrals.

The remainder of the paper is organized as follows. Section~2 reviews related work. Section~3 defines the continuous-time progress-based task model. Section~4 characterizes the agent's trajectory and abandonment behavior under generalized hyperbolic discounting. Section~5 studies the optimal goal-setting problem. Section~6 studies the optimal reward-scheduling problem. Section~7 concludes, and Section~8 provides proofs of the mathematical results.

\section{Related Work}
\paragraph{Time-inconsistent planning and intervention design.}
Our work is most closely related to computational models of time-inconsistent planning. \citet{kleinberg2014time} introduced a graph-based model in which a present-biased agent repeatedly chooses a path of minimum perceived cost. This model captures characteristic behaviors such as procrastination and task abandonment, and has led to a line of work at the interface of economics and computation, including planning problems for sophisticated agents, multiple biases, variable present bias, and stochastic behavioral biases \cite{kleinberg2016planning,kleinberg2017planning,gravin2016procrastination,kleinberg2021stochastic}. A key feature of this line of work is that time-inconsistent behavior is treated not only as a descriptive phenomenon, but also as a target of algorithmic intervention. For example, prior studies have examined motivating subgraphs, penalties, and other interventions that change the task environment faced by the agent \cite{albers2019motivating,albers2021value,tang2017computational}. These studies show that intervention design for time-inconsistent agents raises rich computational questions, but they also show that many natural problems in general graph-based models are computationally intractable.

\paragraph{Progress-based tasks.}
To obtain stronger tractability, \citet{akagi2024analytically} introduced a progress-based model for present-biased agents. In this model, an agent accumulates a real-valued amount of progress over a finite horizon and receives a reward if the progress reaches a prescribed goal. This restriction does not cover all planning problems, but it captures an important class of cumulative-effort tasks and enables closed-form characterizations of behavior and efficient solutions for intervention design problems. The model is also closely related to economic models of procrastination, incentives, and self-control in long-term projects \cite{o1999doing,o2006incentives,o2008procrastination}. 
Subsequent work extended this progress-based approach to $\beta$--$\delta$ discounting, a standard discrete-time model of present bias \cite{laibson1997golden}, and showed that the long-run discount factor $\delta$ can qualitatively change optimal interventions \cite{akagi2026delta}.
The present paper continues this line of work, but moves from discrete time to continuous time and from restricted discounting models to generalized hyperbolic discounting. This allows us to analyze intervention design without requiring an arbitrary time discretization, while still preserving analytical tractability.

\paragraph{Discount functions and time-inconsistent preferences.}
Time discounting and time-inconsistent preferences have a long history in economics and behavioral economics \cite{strotz1955myopia,frederick2002time,camerer2004behavioral,wilkinson2017introduction}. Exponential discounting has traditionally served as the standard model of intertemporal choice \cite{samuelson1937note}, but it cannot explain time-inconsistent behavior. This limitation motivated alternative discounting models, including hyperbolic discounting \cite{ainslie1975specious}, quasi-hyperbolic discounting \cite{phelps1968second,laibson1997golden}, and generalized hyperbolic discounting \cite{loewenstein1992anomalies}. Quasi-hyperbolic discounting is particularly convenient in discrete-time models and has been widely used to model present bias. Generalized hyperbolic discounting, by contrast, provides a continuous-time class that includes exponential discounting as a limiting case and hyperbolic discounting as a special case. It has also been characterized axiomatically through impatience and the common difference effect with quadratic delay \cite{al2006note}. In this paper, generalized hyperbolic discounting plays a central role because it is broad enough to include classical discounting models, yet structured enough to permit a detailed analysis of abandonment and intervention design.

\paragraph{Continuous-time models of time inconsistency.}
A separate line of work studies time-inconsistent decision making in continuous time, often through formulations related to optimal control. Since non-exponential discounting typically makes globally optimal plans dynamically inconsistent, these studies often characterize behavior using equilibrium concepts. For example, Ekeland and Lazrak \cite{ekeland2006being,ekeland2008equilibrium} studied subgame-perfect equilibria in continuous time with non-constant discounting, and \citet{karp2007nonconstant} derived a dynamic programming equation for differentiable Markov-perfect equilibria. This line of research has been further developed in general continuous-time equilibrium-control frameworks and refinements of equilibrium concepts \cite{bjork2017time,he2021equilibrium}. Our work is related to these studies in that it also treats time inconsistency in continuous time, but it differs in two important respects. First, these studies primarily focus on sophisticated agents who anticipate future deviations and choose equilibrium strategies. In contrast, our model is closer to naivete: at each time, the agent chooses a perceived-cost-minimizing future trajectory without anticipating that future selves will later reoptimize. Second, our main objective is not to characterize equilibrium controls in general dynamic decision problems, but to obtain tractable behavioral characterizations that can be used for goal and reward design in progress-based tasks.

\paragraph{Connections to reinforcement learning.}
Our framework is also related to reinforcement learning, where agents act so as to maximize discounted rewards \cite{kaelbling1996reinforcement}. The aims, however, are different. Reinforcement learning primarily studies how an agent learns effective behavior in an unknown environment, whereas our goal is to characterize the behavior of a time-inconsistent agent under a specified behavioral model and to design interventions for that agent. These perspectives are complementary. Recent work has also studied reinforcement learning with non-exponential discounting and related identifiability questions \cite{fedus2019hyperbolic,schultheis2022reinforcement,skalse2025partial}, suggesting possible future connections between learning-based models and the intervention-design approach developed here.

\section{Model Definition} \label{sec:model definition}
\subsection{Progress-based Task} \label{subsec:progress-based task}
Following the progress-based task framework \cite{akagi2024analytically}, we consider a task in which an agent aims to reach a goal within a time limit by accumulating a real-valued quantity called progress.
We assume that progress never decreases. 
Such tasks frequently occur in daily life.
For example, consider a student who aims to complete an assignment within 8 hours on a given day. If the student finishes the assignment, they will earn school credit for the course. In this case, the period corresponds to the 8 hours, progress corresponds to the percentage of the assignment completed, and the reward corresponds to the value of earning the course credit.
Another example could be a person participating in a health program that involves exercising for 30 hours within a month. In this scenario, the period is one month, and progress is measured by the hours spent exercising. The reward is the incentive provided upon completing the health program.

The agent is rewarded with $R$ if it achieves the goal progress $\theta$ within the period $T$, where $R, \theta, T \in \R_{> 0}$. 
We denote the agent's state by a tuple $(t, x)$, where $t \in [0, T]$ is the time and $x \in \R_{\geq 0}$ is the progress. 
Note that time $t$ is not restricted to discrete values but takes continuous values, a significant difference from the previous study \cite{akagi2024analytically}.
\cref{fig:task_decription} describes an illustrative example of a progress-based task.

\begin{figure}[t]
  \centering
  \includegraphics[width=0.6\linewidth]{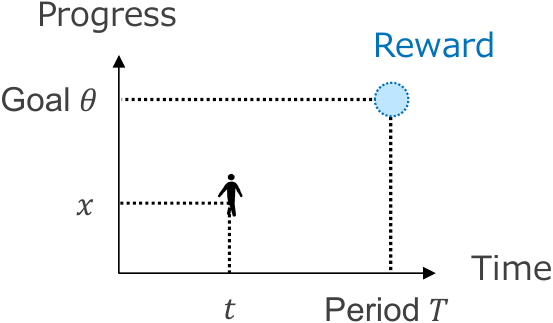}%
  \caption{An illustrative example of a progress-based task.} 
  \label{fig:task_decription}
\end{figure}

\subsection{Perceived Cost} \label{subsec:perceived cost}
We will denote by $\mathcal{Y}_{t, x}$ the set of absolutely continuous functions defined in $[t, T]$ which satisfies $y(t) = x$. 
A function $y(s) \in \mathcal{Y}_{t, x}$ represents the progress trajectory that an agent in state $(t,x)$ will take at future time $s \in [t, T]$.
For progress trajectory $y(s)$, we define \emph{perceived cost} $\mathcal{P}[y]$ by 
\begin{align} \label{eq:perceived cost}
 &\mathcal{P}[y] \coloneqq \int_{t}^T L_t(s, \diff{y}(s)) ds - \dgamma{t}{T} R \cdot \bm{1}[y(T) \geq \theta], \label{eq:perceived cost} \\
  &L_t(s, v) \coloneqq \dgamma{t}{s} c\prn*{v} \label{eq:L}, 
\end{align}
where $\diff{y}(s)$ denotes the derivative function of $y(s)$. 
Note that $\mathcal{P}[\cdot]$ is a \emph{functional} that takes a function as an argument and returns a real value.
The function $\gamma_t: \mathbb{R} \to (0, 1]$  is the \emph{discount function}, which represents how the agent discounts the future benefit or cost; an agent at time $t$ calculates the benefits and costs at a future time $s$ by multiplying the discount factor $\gamma_t(s)$.
The function $c:\R \to \R \cup \set*{+\infty}$ is the \emph{cost function}; to increase progress at a rate of $v$, a cost of $c(v)$ is incurred. 
Also, $\bm{1}[\cdot]$ is an indicator function; $\bm{1}[y(T) \geq \theta] = 1$ if $y(T) \geq \theta$, otherwise $\bm{1}[y(T) \geq \theta] = 0$.

The intuitive interpretation of perceived cost \cref{eq:perceived cost} is the total cost of a future progress trajectory to an agent, considering the discount function effect. 
The first term of \cref{eq:perceived cost} is the total cost perceived by the agent over the period from $t$ to $T$, considering time discounting. It is calculated by integrating the agent's perceived cost $L_t(s, \diff{y}(s))$ at time $s$ from $t$ to $T$. 
The second is the reward term; the agent receives the time-discounted reward $\gamma_t(T)R$ if and only if the final progress $y(T)$ exceeds the goal progress $\theta$. 
This term has a negative sign because rewards have the opposite effect of costs. 

\subsection{Agent Behavior Model} \label{subsec:agent behavior model}
The agent at state $(t, x)$ selects the trajectory $y^*_{t, x}$ with the minimum perceived cost among the candidate trajectories $\mathcal{Y}_{t, x}$. 
If the cost-minimizing trajectory is not unique, we choose one that maximizes 
$\left. \frac{d y^*_{t, x}(s)}{d s} \right|_{s=t}$.
Then, at time $t$, the agent generates progress along $y^*_{t, x}$; it produces progress of 
$\prn*{\left. \frac{d y^*_{t, x}(s)}{d s} \right|_{s=t}} \cdot dt$
during the infinitesimal time $dt$. Starting from the state $(0, 0)$, the agent repeats this procedure at each time $t \in [0, T]$.
\cref{fig:agent_behavior} shows an illustrative example of this process.

\begin{figure}[t]
  \centering
  \includegraphics[width=0.9\linewidth]{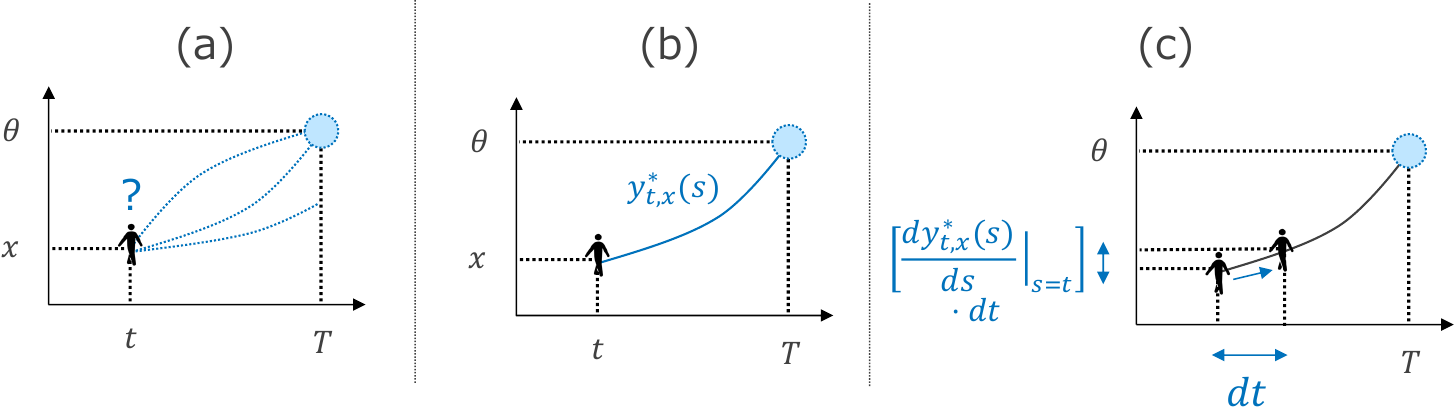}\hfill%
  \caption{
    An illustrative example of the agent behavior model. 
    (a) The agent searches for a trajectory that minimizes the perceived cost among the candidate trajectories of future states $\mathcal{Y}_{t,x}$. The blue dashed line is an example of a candidate trajectory in $\mathcal{Y}_{t,x}$.
    (b) Let $y^*_{t,x}(s)$ denote the trajectory found to minimize the perceived cost.
    (c) The agent generates progress along $y^*_{t,x}(s)$. After this infinitesimal update, the agent returns to the situation in (a) and re-solves the same optimization problem from the new state. 
    Note that while the time interval $dt$ is inherently infinitesimal, it is depicted as larger for illustrative purposes.
  } 
  \label{fig:agent_behavior}
\end{figure}

In this agent behavior model, the trajectory $x(t)$ actually taken by the agent is fully determined by \cref{eq:perceived cost,eq:L} and 
\begin{align}
  y^*_{t, x(t)} &\coloneqq \argmin_{y \in \mathcal{Y}_{t, x(t)}} \mathcal{P}[y], \label{eq:y^*} \\
  \frac{dx(t)}{dt} &= \left. \frac{d y^*_{t, x(t)}(s)}{ds} \right|_{s = t}, x(0) = 0. \label{eq:agent_ODE}
\end{align}

This agent's behavior model is a natural extension of the discrete-time behavior models from previous research \cite{kleinberg2014time,akagi2024analytically} to continuous time. In the previous models, the agent repeatedly chooses the path with the minimum perceived cost among all possible future paths at each time step and acts along that path.
In the model proposed in this paper, the path with the smallest perceived cost corresponds to $y^*_{t, x(t)}$ in \cref{eq:y^*} and acting along the path corresponds to \cref{eq:agent_ODE}. 

The formulation of the agent's behavior in \cref{eq:y^*} is based on the variational principle. 
The variational principle states that a system can be characterized by a function that minimizes or maximizes a given functional \cite{berdichevsky2009variational, kim2021fast, zappala2023neural}.
In our formulation, the agent's future trajectory $y_{t, x(t)}^*$ is defined as the value that minimizes the functional $\mathcal{P}$, making it a clear example of the variational principle. By employing this formulation, we achieve a natural extension of existing agent behavior models to continuous time while also simplifying the analysis of agent behavior by utilizing the well-established variational methods \cite{rockafellar2001convex}.

\subsection{Cost Function} \label{subsec:cost function}
This paper assumes that the cost function can be written as 
\begin{align}
 \label{eq:cost function}
 c(\Delta)
 = 
 \begin{dcases*}
 \Delta^\alpha
 & if $\Delta \geq 0$,\\
 +\infty
 & otherwise, 
 \end{dcases*}
\end{align}
where $\alpha > 1$ is a real parameter. 
Because this paper considers only tasks in which progress never decreases, we set $c(\Delta) = +\infty$ when $\Delta < 0$. 
We can control the shape of the cost function by adjusting the parameter $\alpha$.
This cost function is precisely the same as the one used in the previous study \cite{akagi2024analytically}.

\subsection{Discount Function} \label{subsec:discount function}
The choice of a discount function has a substantial impact on an agent’s behavior. 
We impose the following natural assumptions on $\gamma_t$.
\begin{enumerate}
    \item $\gamma_t(s)$ is continuously differentiable on $[t, T]$.
    \item $(t, s) \mapsto \gamma_t(s)$ is continuous on $\set*{(t, s) \mid 0 \leq t \leq s \leq T}$.
    \item $\gamma_t(s)$ is non-increasing on $[t, T]$.
    \item $\gamma_t(t) = 1$.
\end{enumerate}
This paper mainly focuses on \emph{generalized hyperbolic discounting} \citep{loewenstein1992anomalies}, which is given by
\begin{align} \label{eq:ghyp}
    \gamma_t^{\mathrm{ghyp}}(s) \coloneqq \prn*{1 + \lambda (s - t)}^{-\frac{\mu}{\lambda}} .
\end{align}
Here, $\lambda, \mu \in \R_{>0}$ are parameters.

Generalized hyperbolic discounting constitutes a broad class of discount functions that subsumes historically important and still widely used forms such as exponential \citep{samuelson1937note} and hyperbolic \citep{ainslie1975specious} discounting. In particular, generalized hyperbolic discounting converges to exponential discounting in the limit $\lambda \to +0$, and it coincides with hyperbolic discounting when $\lambda = \mu$. Moreover, it is known that the class of discount functions that simultaneously satisfy two properties often observed in human discounting, namely \emph{impatience} and the \emph{common difference effect with quadratic delay}, coincides with generalized hyperbolic discounting \citep{al2006note}. \cref{fig:ghyp} shows the shapes of the generalized hyperbolic discount functions. 

\begin{figure}[t]
  \centering
  \begin{minipage}{0.48\linewidth}
    \centering
    \includegraphics[width=\linewidth]{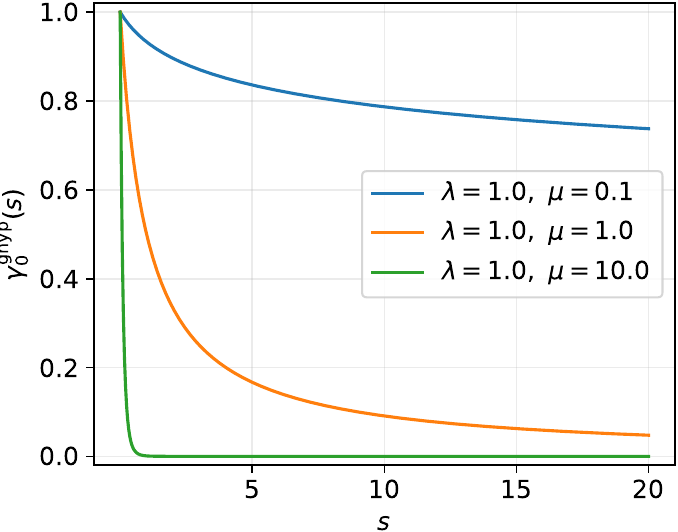}
  \end{minipage}\hfill
  \begin{minipage}{0.48\linewidth}
    \centering
    \includegraphics[width=\linewidth]{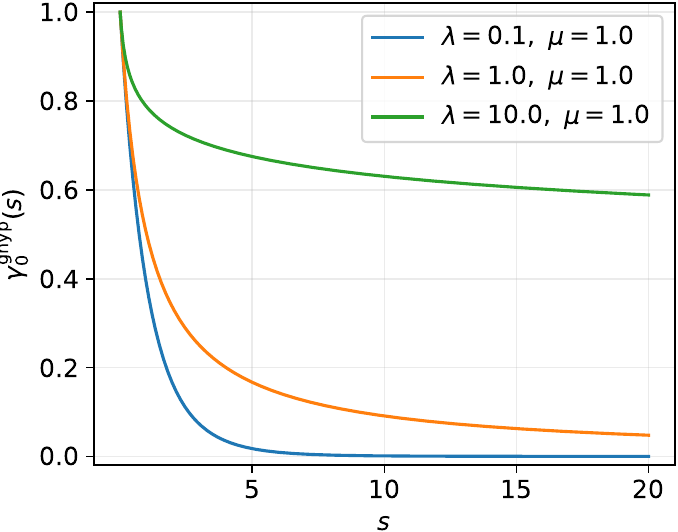}
  \end{minipage}
  \caption{Plots of the generalized hyperbolic discount function $\gamma_0^{\mathrm{ghyp}}(s)$. In the left panel, $\lambda$ is fixed to 1.0 and $\mu = 0.1, 1.0, 10.0$. In the right panel, $\mu$ is fixed to 1.0 and $\lambda = 0.1, 1.0, 10.0$.} \label{fig:ghyp}
\end{figure}

\section{Properties of the Model} \label{sec:properties}
\subsection{Closed Formula of Agent's Trajectory}
The trajectory of the agent $x(t)$ is determined by \cref{eq:perceived cost,eq:L,eq:y^*,eq:agent_ODE}. 
The resulting dynamics involve both a variational problem and a differential equation, and may therefore appear difficult to analyze. Nevertheless, we show that, under mild conditions, the agent's trajectory $x(t)$ can be expressed concisely.
For simplicity, we introduce the following notations; 
\begin{align}
  \Gamma_{t}(s) &\coloneqq \int \dgamma{t}{s}^{-\frac{1}{\alpha - 1}} ds, \label{eq:Gamma}\\
  \zeta(t) &\coloneqq \dgamma{t}{T} \prn*{\dGamma{t}{T} - \dGamma{t}{t}}^{\alpha - 1} \label{eq:zeta}. 
\end{align}
Note that \cref{eq:Gamma} defines $\Gamma_t(s)$ only up to an additive constant. Since the subsequent analysis depends only on differences of the form
$\Gamma_t(s_1)-\Gamma_t(s_2)$, this ambiguity does not affect any of the results.


\begin{theorem}\label{thm:trajectory}
Assume that $\zeta(t)$ is non-increasing on $[0,T]$. Then the agent's trajectory is given by
\begin{align}
x(t)
=
\theta \left(1-\exp\left(-\int_0^{\min\{t,t^*\}}
\frac{ds}{\Gamma_s(T)-\Gamma_s(s)}\right)\right),
\label{eq:analytical_trajectory}
\end{align}
where $t^*$ is defined as follows: If there exists $t \in [0,T]$ such that
\begin{align}
\frac{
\exp\left(-\alpha \int_0^t \frac{ds}{\Gamma_s(T)-\Gamma_s(s)}\right)
}{
\zeta(t)
}
>
\frac{R}{\theta^\alpha},
\label{eq:abandonment_condition}
\end{align}
then $t^*$ is the infimum of all such $t$; otherwise, $t^*=T$.
\end{theorem}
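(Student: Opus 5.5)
The plan is to first solve the inner variational problem \cref{eq:y^*} in closed form at an arbitrary state $(t,x)$ with $x<\theta$. Then I substitute the initial slope of the minimizer into \cref{eq:agent_ODE} and integrate the resulting ODE. Finally, I use monotonicity of $\zeta$ to show that once the agent stops, it never restarts.

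\textbf{Step 1 (inner problem).} Because of the indicator in \cref{eq:perceived cost}, any candidate $y\in\mathcal{Y}_{t,x}$ is either non-completing or completing.
\begin{itemize}
\item Among non-completing trajectories, the constant trajectory $y\equiv x$ is optimal, with cost $0$.
\item Among completing trajectories, it suffices to reach exactly $y(T)=\theta$, since $c$ is increasing on $\R_{\ge 0}$.
\end{itemize}
So the key subproblem is to minimize $\int_t^T\gamma_t(s)\,y'(s)^\alpha\,ds$ subject to $y'\ge0$ and $\int_t^T y'=\theta-x$.

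The Euler--Lagrange (KKT) condition $\alpha\gamma_t(s)y'(s)^{\alpha-1}=\text{const}$ gives the candidate
\[
y'(s)=\frac{(\theta-x)\,\gamma_t(s)^{-\frac{1}{\alpha-1}}}{\Gamma_t(T)-\Gamma_t(t)}.
\]
This candidate is nonnegative, and it is the unique global minimizer by strict convexity of $v\mapsto v^\alpha$. Alternatively, one can invoke Hölder's inequality with exponents $\alpha$ and $\alpha/(\alpha-1)$ applied to $\int y'=\int(\gamma_t^{1/\alpha}y')\gamma_t^{-1/\alpha}$.

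A direct computation gives the minimal completing cost $(\theta-x)^\alpha\bigl(\Gamma_t(T)-\Gamma_t(t)\bigr)^{1-\alpha}$. Completion is therefore weakly preferred if and only if
\[
(\theta-x)^\alpha\bigl(\Gamma_t(T)-\Gamma_t(t)\bigr)^{1-\alpha}\le\gamma_t(T)R,
\qquad\text{equivalently}\qquad
(\theta-x)^\alpha/\zeta(t)\le R.
\]
In the case of equality, the tie-breaking rule selects the completing trajectory, since its initial slope is larger.

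\textbf{Step 2 (dynamics).} Since $\gamma_t(t)=1$, the initial slope of the completing minimizer is $(\theta-x)/(\Gamma_t(T)-\Gamma_t(t))$. The initial slope of the abandoning minimizer is $0$. Hence, while the condition $(\theta-x(t))^\alpha/\zeta(t)\le R$ holds, $x$ solves the linear ODE
\[
\frac{d}{dt}(\theta-x)=-\frac{\theta-x}{\Gamma_t(T)-\Gamma_t(t)},
\]
whose solution from $x(0)=0$ is \cref{eq:analytical_trajectory}. This right-hand side is Lipschitz in $x$ on $[0,T')$ for every $T'<T$, which justifies uniqueness.

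Substituting this explicit solution, the switching condition becomes exactly the negation of \cref{eq:abandonment_condition}. Consequently, the formula is valid up to the first time $t^*$ at which \cref{eq:abandonment_condition} becomes strict. If $t^*=T$, one checks that the integral diverges or the formula extends continuously to $T$, and nothing further happens at $T$.

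\textbf{Step 3 (no restart after $t^*$).} This is the step I expect to be the main obstacle. After $t^*$, I want to show the agent stays frozen at $x(t^*)$; in particular I must rule out restarts, which requires an argument at the infimum.

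Fix any $t>t^*$. By the definition of $t^*$ as an infimum, there is $t'\in(t^*,t)$ at which \cref{eq:abandonment_condition} holds strictly. Write $x_f(t')$ for the value of the formula \cref{eq:analytical_trajectory} with $\min\{t',t^*\}$ replaced by $t'$ (the continued pursuit path). The frozen value satisfies $x(t^*)\le x_f(t')$, so $\theta-x(t^*)\ge\theta-x_f(t')$. Since $\zeta$ is non-increasing, $\zeta(t)\le\zeta(t')$. Combining these,
\[
\frac{(\theta-x(t^*))^\alpha}{\zeta(t)}\ \ge\ \frac{(\theta-x_f(t'))^\alpha}{\zeta(t')}\ >\ R,
\]
so abandonment remains strictly preferred at every $t>t^*$ and $x'(t)=0$ there.

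For the ODE to be well posed, I would make precise that the solution concept is the unique absolutely continuous $x$ consistent with \cref{eq:agent_ODE} almost everywhere. The only delicate point is at $t^*$ itself, a measure-zero set whose behavior does not affect the trajectory.
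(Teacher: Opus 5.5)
Your proposal is correct and follows the same route as the paper. Like the paper, you reduce the minimization of $\mathcal{P}$ to a comparison between the constant path and the cheapest path ending exactly at $\theta$, obtain that path from the Euler--Lagrange condition with cost $(\theta-x)^\alpha(\Gamma_t(T)-\Gamma_t(t))^{1-\alpha}$, integrate the resulting linear ODE, and use monotonicity of $\zeta$ to rule out restarts. The differences are minor: you certify global optimality by pointwise convexity or H\"older's inequality rather than relaxing $c$ to $|\Delta|^\alpha$ and citing Rockafellar's sufficiency theorem, and you handle the tie-breaking at equality and the infimum in the no-restart step more carefully than the paper does.
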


\cref{thm:trajectory} shows that we can calculate the agent's trajectory if the discount function $\gamma_t$ satisfies the property that $\Gamma_t(s)$ and $\int_0^{\min (t, t^*)} \frac{ds}{\Gamma_{s}(T) - \Gamma_{s}(s)}$ are analytically or numerically computable. 
The value $t^*$ is the \emph{abandonment time} when the agent gives up making further progress because any additional progress would not be worth the reward. 
The condition that $\zeta(t)$ is non-increasing corresponds to the scenario where once an agent abandons a task, they do not resume it (see \cref{subsec:proof of thm trajectory}). If this condition does not hold, the agent would repeatedly abandon and resume the task, resulting in more complex behavior. While this case is also interesting, we leave a detailed analysis of it for future work, and this paper focuses on the case where $\zeta(t)$ is non-increasing. 
Note that $x(0)=0$, and under the assumption on $\gamma_t(s)$ stated in \cref{subsec:discount function}, we can show that $x(T)=\theta$ if $t^*=T$.

We explain the derivation of \cref{thm:trajectory} briefly. First, we use the variational method to solve for \cref{eq:y^*}. Typically, the variational method can only find local optima. However, in this case, due to the special structure of the cost function \cref{eq:cost function}, the functional $\mathcal{P}$ becomes a convex functional. This property allows us to obtain the global optimum  $y^*_{t, x(t)}$ by solving a differential equation called the Euler--Lagrange equation \cite{rockafellar2001convex}. 
We finally derive \cref{eq:analytical_trajectory} by solving the differential equation \cref{eq:agent_ODE} using the obtained $y^*_{t, x(t)}$. 
For details, please refer to \cref{subsec:proof of thm trajectory}.

The following lemma verifies that the assumption in \cref{thm:trajectory} is satisfied under generalized hyperbolic discounting.
\begin{lemma} \label{thm:zeta non-increasing}
If $\gamma_t = \gamma_t^{\mathrm{ghyp}}$ defined as \cref{eq:ghyp}, then $\zeta(t)$ is non-increasing on $[0, T]$.
\end{lemma}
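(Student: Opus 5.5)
The plan is to use the fact that under generalized hyperbolic discounting every quantity in $\zeta(t)$ depends on $t$ only through the remaining time $u \coloneqq T-t$. So I would write $\zeta(t)=f(T-t)$ and prove that $f$ is non-decreasing on $[0,T]$. Since $t\mapsto T-t$ is decreasing, this gives that $\zeta$ is non-increasing.

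\textbf{Step 1: explicit formula for $f$.} Set $k\coloneqq \frac{\mu}{\lambda(\alpha-1)}>0$. Then $\gamma_t(s)^{-\frac{1}{\alpha-1}}=(1+\lambda(s-t))^{k}$. Substituting $r=s-t$ gives
\begin{align}
\Gamma_t(T)-\Gamma_t(t)=\int_0^{u}(1+\lambda r)^{k}\,dr=\frac{(1+\lambda u)^{k+1}-1}{\lambda(k+1)}\eqqcolon I(u).
\end{align}
The additive constant in $\Gamma_t$ cancels in this difference. Hence
\begin{align}
f(u)=(1+\lambda u)^{-\mu/\lambda}\, I(u)^{\alpha-1}.
\end{align}
This $f$ is continuous on $[0,T]$, satisfies $f(0)=0$, and is strictly positive for $u>0$.

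\textbf{Step 2: logarithmic derivative.} For $u>0$,
\begin{align}
\frac{d}{du}\log f(u)=-\frac{\mu}{1+\lambda u}+(\alpha-1)\frac{(1+\lambda u)^{k}}{I(u)}.
\end{align}
This is nonnegative if and only if $\mu I(u)\le (\alpha-1)(1+\lambda u)^{k+1}$. This inequality is the only real content of the proof, and it is where I expect the main obstacle to lie, namely comparing the constants.

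To handle it, compute the prefactor
\begin{align}
\frac{\mu}{\lambda(k+1)}=\frac{\mu(\alpha-1)}{\lambda(\alpha-1)+\mu}<\alpha-1.
\end{align}
This yields
\begin{align}
\mu I(u)<(\alpha-1)\bigl((1+\lambda u)^{k+1}-1\bigr)<(\alpha-1)(1+\lambda u)^{k+1}.
\end{align}
Therefore $\log f$ is increasing on $(0,T]$. Combined with continuity of $f$ at $0$ and $f(0)=0\le f(u)$, it follows that $f$ is non-decreasing on $[0,T]$. Consequently $\zeta(t)=f(T-t)$ is non-increasing on $[0,T]$.
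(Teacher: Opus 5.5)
Your proof is correct and follows essentially the same route as the paper. Both arguments write $\zeta$ explicitly as a function of the remaining time (the paper uses $x=1+\lambda(T-t)$ and $\nu=\mu/\lambda$) and show it is non-decreasing in that variable by a sign check on the derivative. Your log-derivative condition $\mu I(u)\le(\alpha-1)(1+\lambda u)^{k+1}$ is equivalent to positivity of the factor $(\alpha-1)x^{1+\nu/(\alpha-1)}+\nu$ in the paper's derivative of $\tilde{\zeta}$, and your separate handling of the endpoint $u=0$ (where $f(0)=0$) is a careful touch.
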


Combining \cref{thm:trajectory,thm:zeta non-increasing} with explicit calculations yields the following theorem on the agent's trajectory under generalized hyperbolic discounting.

\begin{theorem} \label{thm:trajectory_ghyp}
Define $\eta(t)$ and $f_T(t)$ by
\begin{align}
    \eta(t) &\coloneqq \int_0^{t}   \frac{ds}{\prn*{1 + \lambda(T-s)}^{\frac{\mu}{\lambda(\alpha-1)} + 1} - 1}, \label{eq:def eta} \\
    f_T(t) &\coloneqq \prn*{\frac{\mu + \lambda(\alpha - 1)}{(\alpha-1)\bra*{\prn*{1 + \lambda(T-t)}^{\frac{\mu}{\lambda(\alpha-1)} + 1}-1}}}^{\alpha-1} \cdot \\ 
    &\prn*{1 + \lambda (T - t)}^{\frac{\mu}{\lambda}} \exp \prn*{- \alpha \prn*{\frac{\mu}{\alpha-1} + \lambda} \eta(t)} \label{eq:def f}. 
\end{align}
When $\gamma_t = \gamma_t^{\mathrm{ghyp}}$ is defined as \cref{eq:ghyp}, the agent's trajectory is given by 
\begin{align}
    x(t) = \theta \prn*{1 - \exp \prn*{- \prn*{\frac{\mu}{\alpha-1} + \lambda}  \eta \prn*{\min(t, t^*)}}}, \label{eq:trajectory_ghyp} 
\end{align}
where $t^*$ is defined as follows: If there exists $t \in [0, T]$ such that
\begin{align}
    f_T(t) > \frac{R}{\theta^{\alpha}} \label{eq:def of t*}, 
\end{align}
then $t^*$ is the infimum of all such $t$; otherwise, $t^* = T$. 
\end{theorem}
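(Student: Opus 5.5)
We will derive \cref{thm:trajectory_ghyp} as a direct specialization of \cref{thm:trajectory}. By \cref{thm:zeta non-increasing}, $\zeta$ is non-increasing under $\gamma_t=\gamma_t^{\mathrm{ghyp}}$, so the hypothesis of \cref{thm:trajectory} holds. It then remains to compute the two ingredients appearing in \cref{eq:analytical_trajectory,eq:abandonment_condition} in closed form: the difference $\Gamma_t(T)-\Gamma_t(t)$, and $\zeta(t)$.

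For the first ingredient, set $\beta \coloneqq \frac{\mu}{\lambda(\alpha-1)}$. Then $\gamma_t(s)^{-1/(\alpha-1)} = (1+\lambda(s-t))^{\beta}$. An antiderivative is therefore $\Gamma_t(s) = \frac{(1+\lambda(s-t))^{\beta+1}}{\lambda(\beta+1)}$, which gives
\begin{align}
\Gamma_t(T)-\Gamma_t(t) = \frac{(1+\lambda(T-t))^{\beta+1}-1}{\lambda(\beta+1)}.
\end{align}
Since $\lambda(\beta+1) = \frac{\mu}{\alpha-1}+\lambda$, we obtain
\begin{align}
\int_0^{t}\frac{ds}{\Gamma_s(T)-\Gamma_s(s)} = \Bigl(\tfrac{\mu}{\alpha-1}+\lambda\Bigr)\eta(t).
\end{align}
Substituting this into \cref{eq:analytical_trajectory} yields \cref{eq:trajectory_ghyp} immediately. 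The integrand of $\eta$ is finite for $s<T$ but behaves like $1/(T-s)$ near $T$. We will note that this is consistent with \cref{thm:trajectory}, where the same integral appears, so no extra convergence argument is needed beyond what that theorem already handles; at $t=T$ the values are understood as limits.

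For the second ingredient, we have
\begin{align}
\zeta(t) = (1+\lambda(T-t))^{-\mu/\lambda}\Bigl(\frac{(\alpha-1)\bigl[(1+\lambda(T-t))^{\beta+1}-1\bigr]}{\mu+\lambda(\alpha-1)}\Bigr)^{\alpha-1},
\end{align}
using $\lambda(\beta+1)=\frac{\mu+\lambda(\alpha-1)}{\alpha-1}$. Then the left-hand side of \cref{eq:abandonment_condition} is
\begin{align}
\exp\Bigl(-\alpha\bigl(\tfrac{\mu}{\alpha-1}+\lambda\bigr)\eta(t)\Bigr)/\zeta(t),
\end{align}
and inverting $\zeta$ produces exactly $f_T(t)$ as in \cref{eq:def f}. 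Hence the condition defining $t^*$ in \cref{thm:trajectory} coincides with \cref{eq:def of t*}, and the two definitions of $t^*$ agree.

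The only delicate point is the endpoint $t=T$. There $\Gamma_T(T)-\Gamma_T(T)=0$, so $\zeta(T)=0$ and $f_T$ and $\eta$ blow up or degenerate. We will check that the convention in \cref{thm:trajectory} (strict inequality, infimum, $t^*=T$ otherwise) transfers verbatim, interpreting $f_T(T)=+\infty$ in the same way the ratio in \cref{eq:abandonment_condition} is interpreted. Beyond this, the argument is pure algebraic bookkeeping of the exponents $\beta+1$ and $\alpha-1$.
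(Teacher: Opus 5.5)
Your proposal is correct and follows the paper's route. \cref{thm:zeta non-increasing} licenses \cref{thm:trajectory}, and substituting $\Gamma_t(s)=\frac{\alpha-1}{\mu+\lambda(\alpha-1)}(1+\lambda(s-t))^{\frac{\mu}{\lambda(\alpha-1)}+1}$ and the resulting closed form of $\zeta(t)$ (both computed in the paper's proof of \cref{thm:zeta non-increasing}) turns the integral into $(\frac{\mu}{\alpha-1}+\lambda)\eta(t)$ and the left-hand side of the abandonment condition into $f_T(t)$, just as you do.

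One correction to your endpoint remark. Near $T$, $\eta(t)=\frac{1}{\frac{\mu}{\alpha-1}+\lambda}\log\frac{1}{T-t}+O(1)$, so the exponential factor in $f_T$ behaves like $C(T-t)^{\alpha}$ while the prefactor behaves like $(T-t)^{-(\alpha-1)}$; hence $f_T(t)\to 0$, not $+\infty$, as $t\to T$. This does not affect your proof, because including or excluding the single point $t=T$ never changes $t^*$, which is an infimum with default value $T$.
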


In \cref{thm:trajectory_ghyp}, the main obstacle to an analytic treatment is the function $\eta(t)$. The function $\eta(t)$ belongs to the class of Gauss hypergeometric functions \citep{abramowitz1964handbook}, and the integral cannot be expressed by elementary functions except in special cases, for example when $\frac{\mu}{\lambda(\alpha-1)} = 1$.
However, as we show later, even though such a complicated function is involved, we can still analyze the agent's behavior mathematically and discuss optimal intervention strategies.
Note that $x(0) = 0$, and $\lim_{t \to T}x(t) = \theta$ if $t^* = T$. 

In what follows, we restrict attention to the generalized hyperbolic discounting case, that is, $\gamma_t = \gamma_t^{\mathrm{ghyp}}$.

\subsection{Abandonment Time and Time Inconsistency}
In this section, we investigate the abandonment time $t^*$ in \cref{thm:trajectory_ghyp}, which represents the time at which the agent gives up. The value of $t^*$ is important not only because it determines the trajectory that the agent follows during the task, but also because it determines whether the agent exhibits time-inconsistent behavior.
Indeed, when $t^* = 0$, the agent gives up from the outset and does not start the task, whereas when $t^* = T$, the agent completes the task. In either case, the agent behaves time-consistently. In contrast, when $0 < t^* < T$, the agent starts the task expecting to complete it and receive the reward, but later abandons it after judging that the cost of completion is no longer worth the reward. Thus, the agent exhibits time-inconsistent behavior.

Since $t^*$ is defined as the infimum of all $t$ satisfying \cref{eq:def of t*}, with $t^* = T$ if no such $t$ exists, understanding the behavior of $t^*$ requires analyzing the shape of $f_T(t)$ on $[0, T]$. The following lemma characterizes the shape of $f_T(t)$, in particular where its maximum is attained.
\begin{lemma} \label{thm:shape of f_T(t)}
    The equation
    \begin{align}
        (\alpha - 1) x^{\frac{\mu}{\lambda} + \alpha - 1}
        - \alpha \prn*{\frac{\mu}{\lambda(\alpha - 1)} + 1}x^{\alpha - 1}
        + \frac{\mu}{\lambda} = 0 \label{eq:x_0 equation}
    \end{align}
    has a unique real solution with $x > 1$. Let $x_{\sol}$ denote this solution, and define
    \begin{align}
        t_{\sol} \coloneqq T - \frac{x_{\sol}^{\alpha - 1}-1}{\lambda}.
    \end{align}
    \begin{enumerate}
        \item If $t_{\sol} \leq 0$, then $f_T(t)$ is strictly decreasing on $[0, T]$. In particular, its maximum is attained at $t = 0$.
        \item If $t_{\sol} > 0$, then $f_T(t)$ is strictly increasing on $[0, t_{\sol}]$ and strictly decreasing on $[t_{\sol}, T]$. In particular, its maximum is attained at $t = t_{\sol}$.
    \end{enumerate}
\end{lemma}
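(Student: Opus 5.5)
The plan is to study the sign of the logarithmic derivative of $f_T$ on $[0,T)$. Every factor in \cref{eq:def f} is positive there, so the sign of $\frac{d}{dt}\ln f_T(t)$ determines whether $f_T$ is increasing or decreasing. The endpoint $t=T$ is treated as a limit at the end.

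\textbf{Notation.} Set $u \coloneqq 1+\lambda(T-t)$, so that $du/dt=-\lambda$ and $u$ runs over $[1,1+\lambda T]$ as $t$ runs backwards from $T$ to $0$. Let $k \coloneqq \frac{\mu}{\lambda(\alpha-1)}+1>1$ and $\kappa \coloneqq \frac{\mu}{\alpha-1}+\lambda = \lambda k$. Write $D \coloneqq u^k-1$, which is positive for $t<T$. Then
\begin{align}
\ln f_T(t) = \text{const} - (\alpha-1)\ln D + \frac{\mu}{\lambda}\ln u - \alpha\kappa\,\eta(t).
\end{align}

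\textbf{Derivative.} We use $\eta'(t)=1/D$, $\frac{d}{dt}\ln D = -\lambda k u^{k-1}/D$ and $\frac{d}{dt}\ln u = -\lambda/u$. This gives
\begin{align}
\frac{d}{dt}\ln f_T(t) = \frac{(\alpha-1)\kappa u^{k-1}}{D} - \frac{\mu}{u} - \frac{\alpha\kappa}{D}.
\end{align}
Multiply through by $uD>0$ and use $(\alpha-1)\kappa-\mu=\lambda(\alpha-1)$. The sign of the derivative is then the sign of
\begin{align}
g(u) \coloneqq \lambda(\alpha-1)u^k - \alpha\kappa u + \mu .
\end{align}
Now substitute $u=x^{\alpha-1}$ (with $x>1$ iff $u>1$), so that $u^k=x^{\frac{\mu}{\lambda}+\alpha-1}$, and divide by $\lambda$. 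The result is exactly the left-hand side of \cref{eq:x_0 equation}. Hence roots of \cref{eq:x_0 equation} with $x>1$ correspond bijectively to roots of $g$ with $u>1$.

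\textbf{Existence and uniqueness of the root.} $g$ is strictly convex in $u$ because $k>1$, and $g(u)\to+\infty$ as $u\to\infty$. A direct computation gives
\begin{align}
g(1) = \lambda(\alpha-1)-\alpha\kappa+\mu = -\lambda-\frac{\mu}{\alpha-1}<0.
\end{align}
A strictly convex function that is negative at $u=1$ and tends to $+\infty$ has exactly one zero $u_0>1$. Moreover $g<0$ on $[1,u_0)$ and $g>0$ on $(u_0,\infty)$. Setting $x_0 = u_0^{1/(\alpha-1)}$ gives the unique solution of \cref{eq:x_0 equation} with $x>1$. The point $u=u_0$ corresponds to $t=T-(u_0-1)/\lambda=t_0$.

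\textbf{Monotonicity.} Since $u$ is decreasing in $t$, the derivative of $f_T$ is positive when $t<t_0$ (there $u>u_0$) and negative when $t>t_0$. If $t_0\le 0$, every $t\in[0,T)$ has $u\le u_0$, with equality at most at the single point $t=0$. So $f_T$ is strictly decreasing, which is case (a). If $t_0>0$, we get strict increase on $[0,t_0]$ and strict decrease on $[t_0,T)$, which is case (b).

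\textbf{Main obstacle: the endpoint $t=T$.} Here $D\to 0$, so the factor $D^{-(\alpha-1)}$ blows up. The plan is to show $f_T(t)\to 0$ as $t\to T$, and to interpret $f_T(T)$ as this limit, so that strict decrease extends to the closed interval $[0,T]$. Near $s=T$ we have $D\sim k\lambda(T-s)$. Hence
\begin{align}
\eta(t) = -\frac{1}{\kappa}\ln(T-t)+O(1).
\end{align}
This gives $\exp(-\alpha\kappa\eta(t)) \asymp (T-t)^{\alpha}$, while $D^{-(\alpha-1)} \asymp (T-t)^{-(\alpha-1)}$. The product is therefore of order $(T-t)\to 0$. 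Making the $O(1)$ term in $\eta$ precise is routine: bound $1/D - 1/(k\lambda(T-s))$, which is integrable near $T$.
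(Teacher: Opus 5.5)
Your proposal is correct and follows essentially the same route as the paper: differentiate $\log f_T$, reduce its sign to an auxiliary function (your $g(u)$ is exactly $\lambda$ times the paper's $g(x)$ under $u = x^{\alpha-1}$), show this function has a unique root above $1$ with a single sign change, and read off the monotonicity. The differences are cosmetic: you use strict convexity in $u$ where the paper shows $g$ first decreases and then increases in $x$, and you explicitly treat the endpoint $t=T$ (showing $f_T(t)\to 0$), which the paper leaves implicit.
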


The following theorem follows immediately from \cref{thm:shape of f_T(t)} and the definition of $t^*$.



\begin{theorem} \label{thm:value of t^*}
    \begin{enumerate}
        \item If $t_{\sol} \leq 0$, then
        \begin{align}
            t^* =
            \begin{cases}
                0 & \text{if } f_T(0) > \frac{R}{\theta^{\alpha}}, \\
                T & \text{otherwise}.
            \end{cases} \label{eq:t star tsol small}
        \end{align}

        \item If $t_{\sol} > 0$, then
        \begin{align}
            t^* =
            \begin{cases}
                0 & \text{if } f_T(0) > \frac{R}{\theta^{\alpha}}, \\
                t' & \text{if } f_T(0) \leq \frac{R}{\theta^{\alpha}} < f_T(t_{\sol}), \\
                T & \text{otherwise},
            \end{cases} \label{eq:t star tsol large}
        \end{align}
        where $t' \in [0, t_{\sol}]$ is the unique value satisfying
        \begin{align}
            f_T(t') = \frac{R}{\theta^{\alpha}}. \label{eq:t' def}
        \end{align}
    \end{enumerate}
\end{theorem}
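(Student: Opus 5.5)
The plan is to read off $t^*$ directly from its definition in \cref{thm:trajectory_ghyp}, using the shape of $f_T$ given by \cref{thm:shape of f_T(t)}. Write $S \coloneqq \set*{t \in [0,T] \mid f_T(t) > R/\theta^\alpha}$. Then $t^* = \inf S$ if $S \neq \emptyset$, and $t^* = T$ otherwise. Two preliminary facts are needed.

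\begin{itemize}
\item First, $f_T$ is continuous on $[0,T)$. This holds because $\eta$ is an integral of a continuous integrand on $[0,t]$ for $t<T$, and the remaining factors in \cref{eq:def f} are continuous there.
\item Second, we need the behavior of $f_T$ near $T$. The bracket $\bigl(1+\lambda(T-t)\bigr)^{\frac{\mu}{\lambda(\alpha-1)}+1}-1$ behaves like a positive multiple of $(T-t)$, so the integrand of $\eta$ behaves like $c/(T-s)$. Consequently $\exp(-\alpha(\frac{\mu}{\alpha-1}+\lambda)\eta(t))$ decays like a power of $(T-t)$. Because \cref{thm:shape of f_T(t)} already gives strict monotonicity on the full interval up to $T$, I only need $f_T$ to be well defined on $[0,T)$. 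Any value or limit at $T$ cannot make $T$ the infimum unless $S$ is otherwise empty, and in that case the conclusion $t^*=T$ is the same either way.
\end{itemize}

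\textbf{Case (a): $t_{\sol} \le 0$.} By \cref{thm:shape of f_T(t)}, $f_T$ is strictly decreasing on $[0,T]$, so $\max f_T = f_T(0)$.
\begin{itemize}
\item If $f_T(0) > R/\theta^\alpha$, then $0 \in S$, and hence $t^* = \inf S = 0$.
\item Otherwise $f_T(t) \le f_T(0) \le R/\theta^\alpha$ for every $t$, so $S=\emptyset$ and $t^*=T$.
\end{itemize}

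\textbf{Case (b): $t_{\sol} > 0$.} Now $f_T$ is strictly increasing on $[0,t_{\sol}]$ and strictly decreasing on $[t_{\sol},T]$, with maximum $f_T(t_{\sol})$.
\begin{itemize}
\item If $f_T(0) > R/\theta^\alpha$, then $0\in S$ and $t^*=0$ as before.
\item If $f_T(t_{\sol}) \le R/\theta^\alpha$, then $S=\emptyset$ and $t^*=T$.
\item In the middle regime $f_T(0)\le R/\theta^\alpha < f_T(t_{\sol})$, the intermediate value theorem, applied to the continuous and strictly increasing $f_T$ on $[0,t_{\sol}]$, gives a unique $t'\in[0,t_{\sol})$ with $f_T(t')=R/\theta^\alpha$. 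I then claim $S = (t', t'')$ for some $t''\in(t_{\sol},T]$, where $t''$ is the analogous crossing on the decreasing branch (or $T$).
\end{itemize}
To see this, note that for $t\le t'$ monotonicity gives $f_T(t)\le R/\theta^\alpha$, so $t\notin S$. For $t\in(t',t_{\sol}]$ we have $f_T(t)>R/\theta^\alpha$, so $t\in S$. Hence $\inf S = t'$, which proves \cref{eq:t star tsol large}.

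The only step needing care is the continuity and finiteness of $f_T$ near $T$, needed to apply the intermediate value theorem and to make sense of the maximum. Since $t'<t_{\sol}<T$, however, the intermediate value theorem is only applied on $[0,t_{\sol}]$, where $f_T$ is plainly continuous. So I expect no real obstacle beyond quoting \cref{thm:shape of f_T(t)} correctly.
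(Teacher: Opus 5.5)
Your proposal is correct and follows the same route as the paper, which states only that the result follows immediately from \cref{thm:shape of f_T(t)} and the definition of $t^*$. Your case analysis on $S$, with the intermediate value theorem applied only on $[0,t_{\sol}]$ where continuity is clear, makes that deduction explicit. The side claim $S=(t',t'')$ is never used (only $\inf S = t'$ is needed), so you can drop it.
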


\Cref{thm:value of t^*} implies the following. If $t_{\sol} \leq 0$, that is, if $T \leq \frac{x_{\sol}^{\alpha - 1}-1}{\lambda}$, then the agent behaves time-consistently regardless of the values of the goal $\theta$ and the reward $R$. In contrast, if $t_{\sol} > 0$, that is, if $T > \frac{x_{\sol}^{\alpha - 1}-1}{\lambda}$, then the agent may exhibit time-inconsistent behavior depending on the values of $\theta$ and $R$. Since $\frac{x_{\sol}^{\alpha - 1}-1}{\lambda}$ does not depend on $T$, time-inconsistent behavior is more likely to arise when $T$ is large.

Furthermore, we can analyze the agent's behavior as $\lambda \to 0$, that is, as the discount function approaches exponential discounting. To explain this, we first present the following lemma.

\begin{lemma} \label{thm:x convergence}
    $\lim_{\lambda \to 0} \frac{x_{\sol}^{\alpha - 1} - 1}{\lambda} = \infty$. 
\end{lemma}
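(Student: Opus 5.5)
The plan is to rescale the equation \cref{eq:x_0 equation} so that the limit $\lambda\to 0$ becomes the limit of a single large parameter, and then show that the root cannot lie within $O(\lambda)$ of $1$. Fix $\alpha>1$ and $\mu>0$. Substitute $u \coloneqq x^{\alpha-1}$ and set $k \coloneqq \frac{\mu}{\lambda(\alpha-1)}$, so that $k\to\infty$ as $\lambda\to 0$. Then $x^{\mu/\lambda}=u^{k}$ and $\mu/\lambda=(\alpha-1)k$. Dividing \cref{eq:x_0 equation} by $\alpha-1$ turns it into $g_k(u)=0$, where
\begin{align}
g_k(u) \coloneqq u^{k+1} - c\,(k+1)\,u + k, \qquad c \coloneqq \frac{\alpha}{\alpha-1} > 1 .
\end{align}
Let $u_0 \coloneqq x_{\sol}^{\alpha-1}>1$ be the corresponding root. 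The claim is then equivalent to $k(u_0-1)\to\infty$, because $\frac{u_0-1}{\lambda}=\frac{\alpha-1}{\mu}\,k(u_0-1)$.

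First I will record the sign structure of $g_k$ on $[1,\infty)$.
\begin{itemize}
\item $g_k(1)=(1-c)(k+1)<0$.
\item $g_k$ is strictly convex on $(0,\infty)$.
\item $g_k(u)\to\infty$ as $u\to\infty$.
\end{itemize}
Hence $g_k$ has exactly one zero on $(1,\infty)$, which is the $u_0$ above (consistent with \cref{thm:shape of f_T(t)}). Moreover $g_k<0$ on $[1,u_0)$ and $g_k>0$ beyond $u_0$. Consequently, to prove $u_0>u$ for a given $u>1$ it suffices to show $g_k(u)<0$.

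Next, fix an arbitrary $A>0$ and evaluate $g_k$ at $u=1+A/k$:
\begin{align}
g_k\!\left(1+\tfrac{A}{k}\right) = \left(1+\tfrac{A}{k}\right)^{k+1} - c(k+1) - cA\,\tfrac{k+1}{k} + k = (1-c)\,k + O(1),
\end{align}
since $\left(1+\frac{A}{k}\right)^{k+1}\to e^{A}$ and the remaining constant terms are bounded. Because $c>1$, this tends to $-\infty$. So for all sufficiently small $\lambda$ we have $g_k(1+A/k)<0$, which gives $u_0>1+A/k$, i.e.\ $k(u_0-1)>A$. As $A$ was arbitrary, $k(u_0-1)\to\infty$, and therefore $\frac{x_{\sol}^{\alpha-1}-1}{\lambda}\to\infty$.

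The only delicate point is the choice of scale. One must realize that the root approaches $1$ (so crude bounds fail) but at a rate strictly slower than $1/k$. The decisive fact is that the linear terms $-c(k+1)u+k$ contribute $(1-c)k\to-\infty$ at that scale, while $u^{k+1}$ stays bounded by roughly $e^{A}$. Once this scaling is identified, the rest is routine.
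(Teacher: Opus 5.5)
Your proof is correct, but it takes a different route from the paper's. The paper never uses the sign structure of the polynomial. It divides the root identity by $x_{\sol}^{\alpha-1}$ to get $(\alpha-1)x_{\sol}^{\mu/\lambda}=\frac{\mu}{\lambda(\alpha-1)}+\alpha+\frac{\mu}{\lambda}\bigl(1-x_{\sol}^{-(\alpha-1)}\bigr)$. It then discards the last term using only $x_{\sol}>1$, takes logarithms, and applies $e^{s}-1\ge s$ to obtain the explicit bound
\[
\frac{x_{\sol}^{\alpha-1}-1}{\lambda}\ \ge\ \frac{\alpha-1}{\mu}\log\left(\frac{\mu}{(\alpha-1)^2\lambda}+\frac{\alpha}{\alpha-1}\right).
\]
In your notation this is a one-liner: $u_0^{k}=c(k+1)-k/u_0>(c-1)k+c$, hence $k(u_0-1)\ge k\log u_0>\log((c-1)k+c)$. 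So the paper's argument is shorter, needs no uniqueness or convexity input, and yields a quantitative $\log(1/\lambda)$ rate.

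Your argument instead locates the root by comparison with a test point at scale $1/k$. That requires knowing that $g_k<0$ exactly on $[1,u_0)$. Your convexity observation in the variable $u=x^{\alpha-1}$ supplies this cleanly, and more directly than the derivative analysis the paper uses to establish uniqueness of $x_{\sol}$. As written, your conclusion is only qualitative, but it upgrades easily. Use $(1+A/k)^{k+1}\le e^{A(k+1)/k}$ and take $A=\frac{k}{k+1}\log((c-1)k)$ for large $k$. This gives $g_k(1+A/k)\le -c-cA(k+1)/k<0$, hence $k(u_0-1)>\frac{k}{k+1}\log((c-1)k)$, essentially the same rate as the paper.
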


\Cref{thm:x convergence} indicates that we have
$
    T \leq \frac{x_{\sol}^{\alpha - 1}-1}{\lambda}
$
for sufficiently small $\lambda$. 
This implies that the agent behaves time-consistently as the discount function approaches exponential discounting.
This result is consistent with the well-known fact that exponential discounting does not induce time-inconsistent behavior \cite{frederick2002time}.

\section{Optimal Goal-setting Problem} \label{sec:goal setting}
In this section, we consider the problem of finding the goal $\theta^*$ that maximizes the final progress $x(T)$:
\begin{align}
  \theta^* \coloneqq \argmax_{\theta \geq 0} x(T),
  \label{eq:problem_goal_setting}
\end{align}
given the period $T$, reward $R$, and parameters $\alpha$, $\lambda$, and $\mu$. This problem captures how an intervener should choose a goal so as to maximize an agent's progress over a fixed time horizon under a given reward. We refer to this as the \emph{optimal goal-setting problem}.

This problem is of practical importance in many real-world situations. For example, consider a company president, acting as the intervener, who offers employees a fixed bonus for achieving a sales goal within six months. A natural question is how this goal should be set to maximize final sales. If the goal is set too high, employees may lose motivation because the goal appears unattainable, leading to lower final sales. By contrast, if the goal is set too low, employees may reach it too easily, obtain the bonus, and reduce their effort thereafter, again resulting in lower sales. These considerations illustrate the importance of choosing an appropriate goal, as formulated in \cref{eq:problem_goal_setting}.

The solution to the goal-setting problem depends crucially on whether \emph{exploitative rewards} are allowed. Exploitative rewards are decoy rewards that take advantage of an agent's time inconsistency. Because a time-inconsistent agent may make substantial progress even when ultimately abandoning the goal, setting a high and unattainable goal can in some cases induce greater progress than setting a lower and achievable one, despite the agent's eventual failure to complete the task. Such a policy may also benefit the intervener, since no reward needs to be paid when the agent does not achieve the goal. At the same time, exploitative rewards raise ethical concerns, for example by reducing motivation or undermining long-term effort, and therefore require careful consideration. The significance of exploitative rewards has been discussed in the literature \cite{kleinberg2014time,tang2017computational,albers2019motivating}. In particular, \citet{akagi2024analytically} showed that, for agents with quasi-hyperbolic discounting, exploitative rewards can substantially increase overall progress.

We therefore study both settings, namely, one in which exploitative rewards are allowed and one in which they are disallowed. When exploitative rewards are allowed, the optimization problem is exactly \cref{eq:problem_goal_setting}. When they are disallowed, we impose the additional constraint $x(T)=\theta$, which guarantees that the agent achieves the goal and receives the reward.

\subsection{Non-Exploitative Reward Case}

\begin{theorem} \label{thm:optimal goal non-exploitatie}
When exploitative rewards are not permitted, 
\begin{align}
    \theta^* = 
    \begin{cases}
    R^{\frac{1}{\alpha}} f_T(0)^{-\frac{1}{\alpha}} & t_{\sol} \leq 0, \\
    R^{\frac{1}{\alpha}} f_T(t_{\sol})^{-\frac{1}{\alpha}} & t_{\sol} > 0. 
    \end{cases}
\end{align}
\end{theorem}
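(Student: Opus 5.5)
Under the non-exploitative constraint the agent must reach the goal, so $x(T)=\theta$ and maximizing $x(T)$ amounts to maximizing $\theta$ over the feasible set. The plan is to identify that feasible set exactly using \cref{thm:trajectory_ghyp} and \cref{thm:value of t^*}.

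First observe that $f_T(t)$, defined in \cref{eq:def f}, depends only on $T,\lambda,\mu,\alpha$ and not on $\theta$ or $R$; the same holds for $t_{\sol}$ from \cref{thm:shape of f_T(t)}. By \cref{eq:trajectory_ghyp}, $x(T)=\theta$ holds exactly when $t^*=T$. If $t^*<T$, then $\eta(t^*)<\eta(T)$ and $x(T)<\theta$. If $t^*=T$, then $x(T)=\theta$; here I use the paper's remark that $\lim_{t\to T}x(t)=\theta$, coming from the divergence of $\eta(t)$ as $t\to T$, since the integrand behaves like $1/((\mu/(\alpha-1)+\lambda)(T-s))$ near $s=T$.

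Next, \cref{thm:value of t^*} characterizes when $t^*=T$. In case $t_{\sol}\le 0$, this happens iff $f_T(0)\le R/\theta^\alpha$. In case $t_{\sol}>0$, it happens iff neither $f_T(0)>R/\theta^\alpha$ nor $f_T(0)\le R/\theta^\alpha<f_T(t_{\sol})$ holds. Since $f_T(t_{\sol})\ge f_T(0)$ by \cref{thm:shape of f_T(t)}, this reduces to $f_T(t_{\sol})\le R/\theta^\alpha$. In both cases the feasibility condition reads $\max_{t\in[0,T]}f_T(t)\le R/\theta^\alpha$, which is just the statement that no $t$ satisfies \cref{eq:def of t*}. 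Since $f_T>0$, it is equivalent to $\theta\le R^{1/\alpha}\bigl(\max_t f_T(t)\bigr)^{-1/\alpha}$. The feasible set is therefore the interval $(0,\theta_{\max}]$, together with the trivial $\theta=0$. Its maximum is attained at $\theta_{\max}$ because the inequality is non-strict. Substituting the location of the maximizer from \cref{thm:shape of f_T(t)} gives the two stated formulas.

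The main subtlety is the boundary $t\to T$. One must check that $f_T(t)$ stays bounded as $t\to T$, so that $\max f_T$ over the closed interval is well defined and given by \cref{thm:shape of f_T(t)}. As $t\to T$ the prefactor blows up like $(T-t)^{-(\alpha-1)}$, while the exponential factor decays like $(T-t)^{\alpha}$, so $f_T(t)\to 0$. Thus the monotonicity statements of \cref{thm:shape of f_T(t)} on $[0,T]$ can be invoked directly, and no further obstacle remains.
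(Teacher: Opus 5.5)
Your proposal is correct and follows the paper's proof: reduce $x(T)=\theta$ to $t^*=T$, read off the feasibility condition ($f_T(0)\le R/\theta^\alpha$ when $t_{\sol}\le 0$, $f_T(t_{\sol})\le R/\theta^\alpha$ when $t_{\sol}>0$) from \cref{thm:value of t^*}, and take the largest admissible $\theta$. Your additional checks are correct and fill in details the paper leaves implicit: the divergence of $\eta$ at $T$, the limit $f_T(t)\to 0$ as $t\to T$, and the fact that the non-strict inequality makes the bound attained.
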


\Cref{thm:optimal goal non-exploitatie} implies that, when exploitative rewards are not allowed, the solution to the optimal goal-setting problem can be expressed in a simple form.

Since exploitative rewards are not allowed, the final progress $x(T)$ must equal $\theta$. Therefore, the maximum final progress is equal to $\theta^*$. It follows that the maximum final progress is monotonically increasing in the reward $R$. With respect to $T$, the following lemma holds.
\begin{lemma} \label{thm:f_T decreasing}
$f_T(0)$ and $f_T(t_{\sol})$ are monotonically decreasing in $T$.
\end{lemma}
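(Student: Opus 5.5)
The plan is to rewrite $f_T(t)$ as a function of the remaining time $\tau \coloneqq T-t$ and of $T$. With this change of variables the dependence on $T$ splits off cleanly: one factor depends only on $\tau$, and the other is an exponential of an integral whose upper limit is $T$.

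Set $k \coloneqq \frac{\mu}{\lambda(\alpha-1)}+1$ and $g(u) \coloneqq \bigl((1+\lambda u)^{k}-1\bigr)^{-1}$. The function $g$ is positive for $u>0$. Substituting $u = T-s$ in \cref{eq:def eta} gives
\begin{align}
  \eta(t) = \int_{T-t}^{T} g(u)\,du .
\end{align}
Hence, with
\begin{align}
  A(\tau) \coloneqq \prn*{\frac{\mu+\lambda(\alpha-1)}{(\alpha-1)\bigl((1+\lambda\tau)^{k}-1\bigr)}}^{\alpha-1}(1+\lambda\tau)^{\frac{\mu}{\lambda}}
\end{align}
and $c \coloneqq \alpha\bigl(\frac{\mu}{\alpha-1}+\lambda\bigr)>0$, \cref{eq:def f} becomes
\begin{align}
  f_T(t) = A(T-t)\exp\prn*{-c\int_{T-t}^{T} g(u)\,du}.
\end{align}

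\textbf{Claim for $f_T(0)$.} At $t=0$ the integral vanishes, so $f_T(0)=A(T)$. It therefore suffices to show that $A$ is strictly decreasing on $(0,\infty)$. Put $z = 1+\lambda\tau>1$. Then, up to an additive constant,
\begin{align}
  \log A = \tfrac{\mu}{\lambda}\log z - (\alpha-1)\log(z^{k}-1),
\end{align}
and its $z$-derivative is
\begin{align}
  \frac{\mu}{\lambda z} - \frac{(\alpha-1)k\,z^{k-1}}{z^{k}-1}.
\end{align}
The second term is larger than the first for two reasons:
\begin{itemize}
  \item $(\alpha-1)k = \frac{\mu}{\lambda}+\alpha-1 > \frac{\mu}{\lambda}$;
  \item $\frac{z^{k-1}}{z^k-1} > \frac{1}{z}$, because $\frac{z^{k}}{z^{k}-1}>1$.
\end{itemize}
So the derivative is negative, and $A$ is strictly decreasing. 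This settles the first claim.

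\textbf{Claim for $f_T(t_{\sol})$.} The key observation is that $\tau_{\sol} \coloneqq T - t_{\sol} = \frac{x_{\sol}^{\alpha-1}-1}{\lambda}$ does not depend on $T$. This holds because \cref{eq:x_0 equation} involves only $\alpha,\lambda,\mu$, and \cref{thm:shape of f_T(t)} gives a unique root $x_{\sol}$. Therefore
\begin{align}
  f_T(t_{\sol}) = A(\tau_{\sol})\exp\prn*{-c\int_{\tau_{\sol}}^{T} g(u)\,du},
\end{align}
where the prefactor $A(\tau_{\sol})$ is a positive constant in $T$. The exponent $-c\int_{\tau_{\sol}}^T g$ is strictly decreasing in $T$, since its $T$-derivative is $-c\,g(T)<0$. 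Hence $f_T(t_{\sol})$ is strictly decreasing in $T$.

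This holds on the range $T>\tau_{\sol}$, where $t_{\sol}>0$ and \cref{thm:optimal goal non-exploitatie} uses $f_T(t_{\sol})$. The same formula also makes sense for all $T$. At $T=\tau_{\sol}$ the two expressions coincide, since $f_T(t_{\sol})=A(\tau_{\sol})=f_T(0)$ there. So the optimal non-exploitative value $f_T(\max(t_{\sol},0))$ is continuous and decreasing in $T$ across the regime switch.

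I expect no real obstacle. The only delicate point is recognizing that $T-t_{\sol}$ is $T$-independent and that the substitution $u=T-s$ isolates the $T$-dependence. The sign check for $A'$ is elementary once written in the variable $z$.
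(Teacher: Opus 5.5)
Your proof is correct and takes essentially the same route as the paper. For $f_T(0)$ the paper also differentiates $\log f_T(0)$ in $T$; it merges the two terms into the single negative fraction $\frac{-\mu-\lambda(\alpha-1)(1+\lambda T)^{k}}{(1+\lambda T)((1+\lambda T)^{k}-1)}$ rather than comparing them termwise. For $f_T(t_{\sol})$ it likewise uses that the prefactor depends only on $x_{\sol}$, and substitutes $u=T-s$ so that the integral in the exponent has $T$-derivative $1/((1+\lambda T)^{k}-1)>0$.
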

Therefore, the maximum final progress is also monotonically increasing in the time horizon $T$. These results show that increasing the reward or relaxing the time constraint makes it possible to induce greater progress from the agent.

\subsection{Exploitative Reward Case}

\begin{theorem} \label{thm:optimal goal exploitatie}
We consider the case where exploitative rewards are allowed.
Define $g(t)$ by
\begin{align}
    g(t) \coloneqq f_T(t)^{-\frac{1}{\alpha}} \prn*{1 - \exp \prn*{- \prn*{\frac{\mu}{\alpha-1} + \lambda}  \eta \prn*{t}  }}. 
\end{align}
Since $g$ is continuous on the compact interval $[0,t_{\sol}]$, the maximizer exists. 
Define 
\begin{align}
    \tau^* \coloneqq \argmax_{\tau \in [0, t_{\sol}]} g(\tau). 
\end{align}
Then, the following hold:
\begin{enumerate}
    \item When $t_{\sol} \leq 0$, $\theta^* = R^{\frac{1}{\alpha}} f_T(0)^{-\frac{1}{
\alpha}}$ and $x(T) = \theta^*$. 

    \item When $t_{\sol} > 0$ and $f_T(t_{\sol})^{-\frac{1}{\alpha}} \geq g(\tau^*)$, $\theta^* = R^{\frac{1}{\alpha}} f_T(t_{\sol})^{-\frac{1}{\alpha}}$ and $x(T) = \theta^*$.

    \item  When $t_{\sol} > 0$ and $f_T(t_{\sol})^{-\frac{1}{\alpha}} < g(\tau^*)$, $\theta^* = R^{\frac{1}{\alpha}} f_T(\tau^*)^{-\frac{1}{\alpha}}$ and $x(T) = R^{\frac{1}{\alpha}}g(\tau^*)$. 
\end{enumerate}
\end{theorem}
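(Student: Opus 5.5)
The plan is to reparametrize the choice of $\theta$ by the resulting abandonment time $t^*$, and then maximize $x(T)$ over the possible regimes from \cref{thm:value of t^*}. Write $\kappa \coloneqq \frac{\mu}{\alpha-1}+\lambda$ and $E(t)\coloneqq 1-\exp(-\kappa\eta(t))$. By \cref{thm:trajectory_ghyp}, $x(T)=\theta E(t^*)$, where $t^*$ depends on $\theta$ only through $r\coloneqq R/\theta^\alpha$. Note that $E$ is strictly increasing on $[0,T)$ and $\lim_{t\to T}E(t)=1$.

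\textbf{Case $t_{\sol}\le 0$.} By \cref{thm:value of t^*}(a), either $t^*=0$, giving $x(T)=0$, or $t^*=T$, giving $x(T)=\theta$. The second option requires $f_T(0)\le R/\theta^\alpha$, that is, $\theta\le R^{1/\alpha}f_T(0)^{-1/\alpha}$. Hence the maximum is $R^{1/\alpha}f_T(0)^{-1/\alpha}$, attained at equality, and the agent completes the task. This is (a), and it coincides with \cref{thm:optimal goal non-exploitatie}.

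\textbf{Case $t_{\sol}>0$.} By \cref{thm:value of t^*}(b), there are three regimes according to the value of $r$.
\begin{enumerate}
\item If $r\ge f_T(t_{\sol})$, the agent completes the task with $x(T)=\theta$. The best value in this regime is $R^{1/\alpha}f_T(t_{\sol})^{-1/\alpha}$.
\item If $f_T(0)\le r<f_T(t_{\sol})$, then $t^*=t'\in[0,t_{\sol})$ with $f_T(t')=r$. By \cref{thm:shape of f_T(t)}, $f_T$ is a strictly increasing continuous bijection from $[0,t_{\sol}]$ onto $[f_T(0),f_T(t_{\sol})]$. So every $\tau\in[0,t_{\sol})$ is realized as $t^*$ by exactly one goal, namely $\theta=R^{1/\alpha}f_T(\tau)^{-1/\alpha}$, and then $x(T)=R^{1/\alpha}f_T(\tau)^{-1/\alpha}E(\tau)=R^{1/\alpha}g(\tau)$.
\item If $r<f_T(0)$, then $t^*=0$ and $x(T)=0\le R^{1/\alpha}g(\tau)$ for any $\tau$, so this regime never helps.
\end{enumerate}
Therefore
\[
\sup_\theta x(T)=R^{1/\alpha}\max\Bigl\{f_T(t_{\sol})^{-1/\alpha},\ \sup_{\tau\in[0,t_{\sol})}g(\tau)\Bigr\}.
\]

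The one delicate point is that $\tau=t_{\sol}$ is not itself realized as an abandonment time: the goal $\theta=R^{1/\alpha}f_T(t_{\sol})^{-1/\alpha}$ gives $r=f_T(t_{\sol})$, which falls in regime 1 and yields completion. However, $g(t_{\sol})=f_T(t_{\sol})^{-1/\alpha}E(t_{\sol})<f_T(t_{\sol})^{-1/\alpha}$, since $t_{\sol}<T$ and hence $E(t_{\sol})<1$. So including $t_{\sol}$ in the maximization of $g$ is harmless, and by continuity the supremum over $[0,t_{\sol})$ equals $\max_{[0,t_{\sol}]}g=g(\tau^*)$. The same strict inequality shows that $g(\tau^*)>f_T(t_{\sol})^{-1/\alpha}$ forces $\tau^*<t_{\sol}$, so in that situation $\tau^*$ is genuinely attained as an abandonment time.

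Comparing the two candidates now gives the result. If $f_T(t_{\sol})^{-1/\alpha}\ge g(\tau^*)$, completion is optimal, which is (b). Otherwise the optimal goal is $\theta^*=R^{1/\alpha}f_T(\tau^*)^{-1/\alpha}$ with $x(T)=R^{1/\alpha}g(\tau^*)$, which is (c).

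The main obstacle is keeping the bijection between goals $\theta$ and abandonment times $t^*$ rigorous at the boundary points. This means checking the infimum definition of $t^*$ at $r=f_T(0)$ and at $r=f_T(t_{\sol})$, which is handled by the strict monotonicity in \cref{thm:shape of f_T(t)}, and confirming that $E(t_{\sol})<1$ so that the endpoint $\tau=t_{\sol}$ does not create a spurious supremum.
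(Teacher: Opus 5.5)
Your proof is correct and follows essentially the same route as the paper: you split by the three regimes for the abandonment time $t^*$, identify the partial-progress regime with the values $R^{1/\alpha}g(t')$ for $t'\in[0,t_{\sol})$, and compare these with the completion value $R^{1/\alpha}f_T(t_{\sol})^{-1/\alpha}$. Your handling of the endpoint is more careful than the paper's, which simply asserts that every point of $[0,t_{\sol}]$ arises as $t'$; your observation that $g(t_{\sol})<f_T(t_{\sol})^{-1/\alpha}$ (since $t_{\sol}<T$ gives $\eta(t_{\sol})<\infty$) is what justifies maximizing $g$ over the closed interval and shows $\tau^*<t_{\sol}$ in case (c).
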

In cases (a) and (b), $x(T)=\theta^*$ holds, implying that exploitative rewards are ineffective and that non-exploitative rewards are optimal.
In contrast, in case (c), $x(T)\ne \theta^*$, implying that exploitative rewards are optimal: $R^{\frac{1}{\alpha}} g(\tau^*)$, which is the maximum final progress achieved under exploitative rewards, exceeds $R^{\frac{1}{\alpha}} f_T(t_{\sol})^{-\frac{1}{\alpha}}$, which is the maximum final progress achieved under non-exploitative rewards.

We now ask when condition (c) holds. The condition $t_{\sol}>0$ was examined in the previous section, so here we focus on the condition $f_T(t_{\sol})^{-\frac{1}{\alpha}} < g(\tau^*)$.

\begin{theorem} \label{thm:alpha bound}
If $1 < \alpha \leq 1 + \frac{\mu + \sqrt{4 \lambda \mu + \mu^2}}{2 \lambda}$, then $f_T(t_0)^{-\frac{1}{\alpha}} \geq g(\tau^*)$.
\end{theorem}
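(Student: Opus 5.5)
The plan is to show that $g$ never exceeds the value $f_T(t_{\sol})^{-\frac{1}{\alpha}}$ on $[0,t_{\sol}]$. I would do this by proving that, under the stated bound on $\alpha$, $g$ is non-decreasing on $[0,t_{\sol}]$, so that $g(\tau^*)=g(t_{\sol})$. The conclusion then follows because
\[
g(t_{\sol}) = f_T(t_{\sol})^{-\frac{1}{\alpha}}\bigl(1-e^{-k\eta(t_{\sol})}\bigr) \le f_T(t_{\sol})^{-\frac{1}{\alpha}},
\qquad k \coloneqq \tfrac{\mu}{\alpha-1}+\lambda .
\]
(We may assume $t_{\sol}>0$, since otherwise the interval is degenerate.)

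To make the monotonicity tractable, I would first factor out the exponential in $f_T$. Set $p \coloneqq \frac{\mu}{\lambda(\alpha-1)}+1$ and $u \coloneqq 1+\lambda(T-\tau)$, and write $f_T(\tau)=A(\tau)\,e^{-\alpha k\eta(\tau)}$ with
\[
A(\tau) = \Bigl(\frac{k}{u^{p}-1}\Bigr)^{\alpha-1}u^{\mu/\lambda}.
\]
Then $g(\tau)=A(\tau)^{-\frac{1}{\alpha}}\bigl(e^{k\eta(\tau)}-1\bigr)$, and $\eta'(\tau)=1/(u^p-1)$. Differentiating $\log g$ gives
\[
(\log g)'(\tau) = -\tfrac{1}{\alpha}(\log A)'(\tau) + \frac{k}{u^p-1}\cdot\frac{e^{k\eta}}{e^{k\eta}-1}.
\]
Since $\frac{e^{k\eta}}{e^{k\eta}-1}>1$, it suffices to prove $(\log A)'(\tau)\le \frac{\alpha k}{u^p-1}$; when $(\log A)'\le 0$ this is trivial.

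Next I would compute $(\log A)'$ using $du/d\tau=-\lambda$ and the identity $(\alpha-1)\lambda p=(\alpha-1)k$:
\[
(\log A)'(\tau)=\frac{(\alpha-1)k\,u^{p-1}}{u^p-1}-\frac{\mu}{u}.
\]
Multiplying by $u(u^p-1)>0$, the sufficient condition becomes the elementary inequality
\[
\varphi(u) \coloneqq \lambda(\alpha-1)u^{p}-\alpha k\,u+\mu \le 0,
\]
to be verified for $u$ in the range swept by $\tau\in[0,t_{\sol}]$, namely $u\in[x_{\sol}^{\alpha-1},\,1+\lambda T]$. We have $\varphi(1)=-\lambda-\frac{\mu}{\alpha-1}<0$, and $\varphi$ is convex because $p>1$. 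So the task reduces to locating where $\varphi$ changes sign relative to the relevant range.

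I expect this comparison to be the main obstacle. The range depends on $T$ through its upper end $1+\lambda T$, while its lower end $x_{\sol}^{\alpha-1}$ is characterised by the equation \cref{eq:x_0 equation} from \cref{thm:shape of f_T(t)}. Because of this, I would not insist that $g'\ge 0$ on all of $[0,t_{\sol}]$. Instead I would pursue the following three steps.

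First, I would use \cref{thm:shape of f_T(t)}: since $f_T$ is increasing on $[0,t_{\sol}]$, we have $(\log A)'\ge \alpha k\eta'$ there. Comparing this with the target bound shows that the only problematic factor is $\frac{e^{k\eta}}{e^{k\eta}-1}$.

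Second, I would rewrite the goal directly as $A(\tau)^{-\frac{1}{\alpha}}(e^{k\eta(\tau)}-1)\le A(t_{\sol})^{-\frac{1}{\alpha}}e^{k\eta(t_{\sol})}$ and integrate the derivative bound from $\tau$ to $t_{\sol}$.

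Third, I would use the $\alpha$-condition, which is equivalent to $\lambda(\alpha-1)^2\le \alpha\mu$, i.e.\ $p-1\ge \frac{\alpha-1}{\alpha}$. This should control the convex function $\varphi$ (equivalently, the root structure of \cref{eq:x_0 equation} after substituting $x^{\alpha-1}=u$), so that the loss from the ``$-1$'' in $e^{k\eta}-1$ is always dominated.

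If this last step fails as a pointwise statement, my fallback is to bound $\max_\tau g(\tau)$ by treating $\eta(\tau)$ as a free variable $e^{k\eta}=z\ge 1$. I would then maximise the resulting two-variable expression, using the $\alpha$-bound to show the maximiser is at the boundary $\tau=t_{\sol}$.
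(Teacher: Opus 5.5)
\textbf{There is a genuine gap: the one concrete reduction in your proposal provably fails, and the key step is missing.}

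Your overall plan matches the paper's: show that $g$ is non-decreasing on $[0,t_{\sol}]$, then use $g(t_{\sol})\le f_T(t_{\sol})^{-1/\alpha}$. Your computations of $(\log A)'$ and $\varphi$ are correct. So is your rewriting of the $\alpha$-condition as $p-1\ge\frac{\alpha-1}{\alpha}$, which is the paper's $\alpha\beta-2\alpha+1\ge 0$ with your $p$ in the role of its $\beta$.

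The problem is the step where you bound $\frac{e^{k\eta}}{e^{k\eta}-1}$ below by $1$. This turns the target into $(\log A)'\le\alpha k\eta'$, which is exactly $(\log f_T)'\le 0$. Moreover, $\varphi(u)/\lambda$ is precisely the left-hand side of \cref{eq:x_0 equation} under $u=x^{\alpha-1}$. Its unique root above $1$ is $u=x_{\sol}^{\alpha-1}$, which is the \emph{left} end of your range. Hence $\varphi>0$ for every $\tau<t_{\sol}$ and every $\alpha>1$.

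This undermines your remaining steps:
\begin{enumerate}
\item Your third step has nothing to act on, since no condition on $\alpha$ can make $\varphi\le 0$ there.
\item Your second step has no usable derivative bound to integrate. The bound from your first step, $(\log f_T)'\ge 0$, integrates to $f_T(\tau)^{-1/\alpha}\ge f_T(t_{\sol})^{-1/\alpha}$, which points the wrong way.
\item The fallback of treating $z=e^{k\eta}$ as a free variable fails outright. For fixed $\tau$, $A(\tau)^{-1/\alpha}(z-1)$ is unbounded in $z$.
\end{enumerate}
Any working argument must use how $\eta(\tau)$ is tied to $\tau$.

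The missing ingredient in the paper is a first-crossing argument that keeps this factor. Write $q=e^{-k\eta}$, so that $q(0)=1$ and $q'=-kq/(u^p-1)$. Then $(\log g)'=\frac{\lambda}{\alpha(u^p-1)}\,h$, where
\[
h=-(\alpha-1)u^{p-1}-\frac{(\alpha-1)(p-1)}{u}+\frac{\alpha p}{1-q}.
\]
The argument has three parts:
\begin{enumerate}
\item Since $q\to1$ as $\tau\to0^+$, we have $h\to+\infty$ there.
\item At any zero of $h$, the equation $h=0$ expresses $q$ as an explicit function of $u$ alone. Substituting this, together with the ODE for $q$, into $h'$ gives $h'=\frac{\lambda(\alpha-1)}{\alpha u^2(u^p-1)}P(u)$. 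Here $P$ is an explicit combination of $u^{2p},u^{p+1},u^p,u,1$ with $P(1)=p^2$, $P'(1)=2p^2$ and $P''(1)>0$.
\item The $\alpha$-condition $\alpha p-2\alpha+1\ge 0$ is exactly what makes $P''$ increasing on $[1,\infty)$. Hence $P>0$ there, so $h'>0$ at every zero of $h$.
\end{enumerate}
Thus $h$ can never cross from positive to negative, and it stays $\ge 0$ on $(0,t_{\sol}]$. This is the step your proposal would need to supply.
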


\Cref{thm:alpha bound} implies that exploitative rewards are ineffective when $\alpha$ satisfies
$
    1 < \alpha \leq 1 + \frac{\mu + \sqrt{4 \lambda \mu + \mu^2}}{2 \lambda}.
$
In other words, exploitative rewards are ineffective when $\alpha$ is sufficiently small.
Since
$
    1 + \frac{\mu + \sqrt{4 \lambda \mu + \mu^2}}{2 \lambda}
    = 1 + \frac{\frac{\mu}{\lambda} + \sqrt{\frac{4\mu}{\lambda} + \prn*{\frac{\mu}{\lambda}}^2}}{2},
$
this upper bound depends only on the ratio $\mu/\lambda$.
As $\lambda \to +0$, generalized hyperbolic discounting approaches exponential discounting, and
$
    \lim_{\lambda \to +0} \prn*{1 + \frac{\mu + \sqrt{4 \lambda \mu + \mu^2}}{2 \lambda}} = +\infty.
$
Therefore, when generalized hyperbolic discounting is close to exponential discounting, exploitative rewards are ineffective regardless of the value of $\alpha$.
When $\lambda = \mu$, generalized hyperbolic discounting coincides with hyperbolic discounting. In this case,
$
    1 + \frac{\mu + \sqrt{4 \lambda \mu + \mu^2}}{2 \lambda}
    = \frac{3 + \sqrt{5}}{2} \approx 2.62,
$
so exploitative rewards are ineffective whenever $1 < \alpha \leq \frac{3 + \sqrt{5}}{2}$.

\Cref{thm:alpha bound} provides a sufficient condition under which exploitative rewards are ineffective. One may then ask whether there exist parameter settings for which exploitative rewards are in fact effective. Such parameter settings do exist. For example, when $\alpha = 3$, $\lambda = 10000$, $\mu = 10000$, and $T = 100$, we obtain
\begin{align}
    t_{\sol} \approx 99.9996, \quad f_T(t_{\sol})^{-\frac{1}{\alpha}} \approx 0.0178, \quad g(95) \approx 0.0234.
\end{align}
Hence, in this case, $t_{\sol} > 0$ and $f_T(t_{\sol})^{-\frac{1}{\alpha}} < g(\tau^*)$. Therefore, exploitative rewards are effective in this case.

\section{Optimal Reward Scheduling Problem} \label{sec:reward scheduling}
This section studies a more advanced intervention design problem, namely, the \emph{optimal reward scheduling problem}. In this setting, the intervener is allowed to split the total reward into multiple parts and distribute them to the agent over time.

The inputs to the problem are the total time horizon $T$, the total reward $R$, and the agent parameters $\alpha$, $\lambda$, and $\mu$. The intervener first chooses a positive integer $N$, and then divides the total time horizon $T$ into $N$ periods $T_1, \ldots, T_N \in \R_{\geq 0}$ and the total reward $R$ into $N$ partial rewards $R_1, \ldots, R_N \in \R_{\geq 0}$. These variables must satisfy
$
    \sum_{i=1}^N T_i = T
$
and
$
    \sum_{i=1}^N R_i = R.
$
The intervener also specifies target progress levels $\theta_i \in \R_{\geq 0}$ for each $i \in \set*{1, \ldots, N}$. We refer to the tuple
$
    \prn*{N, (T_i)_{i=1}^N, (R_i)_{i=1}^N, (\theta_i)_{i=1}^N}
$
as a reward schedule.

Given a reward schedule, the agent accumulates progress over $N$ stages. In stage $i$, the agent attempts to achieve the target progress $\theta_i$ within period $T_i$ in order to obtain reward $R_i$. We assume that progress in each stage is independent of the others. That is, in stage $i$, the agent makes progress without taking into account the rewards or targets associated with any other stage. Let $P_i$ denote the increment of progress in stage $i$. Our objective is to maximize the total progress increment
$
    \sum_{i=1}^{N} P_i
$
by choosing an appropriate reward schedule. 
Here, we consider the setting in which exploitative rewards are \emph{not} allowed. That is, we assume that $P_i = \theta_i$ holds at each stage $i$. 
\Cref{fig:reward schedule} illustrates an example of a reward schedule in this setting.

\begin{figure}[t] 
  \centering 
  \includegraphics[width=0.6\linewidth]{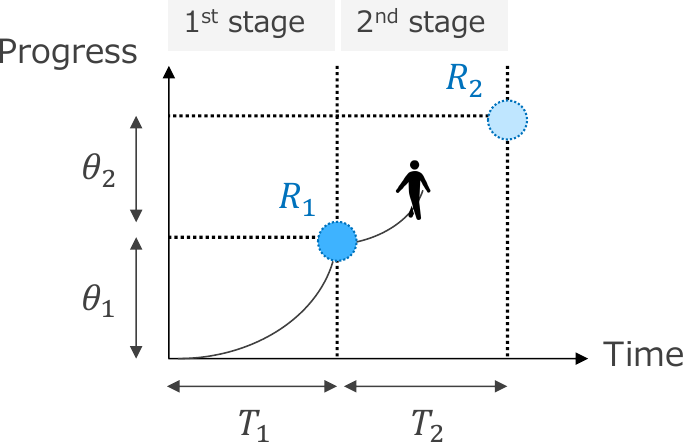}\hfill%
  \caption{An example of reward schedule when $N=2$. } 
  \label{fig:reward schedule}
\end{figure}

\begin{theorem} \label{thm:orsp fixed N}
    When $N$ is fixed, the optimal reward schedule is 
    \begin{align}
        T_i = \frac{T}{N}, \quad R_i = \frac{R}{N}, \quad \theta_i = R_i^{\frac{1}{\alpha}} F(T_i)^{\frac{\alpha-1}{\alpha}}, 
    \end{align}
where 
\begin{align}
    F(T) &\coloneqq 
    \begin{cases}
    F_1(T) & T \leq \frac{x_{\sol}^{\alpha-1}-1}{\lambda}, \\
    F_2(T) & T > \frac{x_{\sol}^{\alpha-1}-1}{\lambda},
    \end{cases} \\
    F_1(T) &\coloneqq f_T(0)^{-\frac{1}{\alpha-1}}, \\
    F_2(T) &\coloneqq f_T\prn*{T - \frac{x_{\sol}^{\alpha-1}-1}{\lambda}}^{-\frac{1}{\alpha-1}}.
    \end{align}
The optimal objective function value $\sum_{i=1}^{N} P_i$ equals to
\begin{align}
G(N) \coloneqq R^{\frac{1}{\alpha}} \prn*{N \cdot F \prn*{\frac{T}{N}}}^{\frac{\alpha - 1}{\alpha}}. 
\end{align}
\end{theorem}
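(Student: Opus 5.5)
Within a single stage of length $T_i$ with reward $R_i$, \cref{thm:optimal goal non-exploitatie} gives the largest goal the agent completes. Its value is $\theta_i = R_i^{1/\alpha} f_{T_i}(0)^{-1/\alpha}$ when $t_{\sol}\le 0$, i.e.\ when $T_i \le \frac{x_{\sol}^{\alpha-1}-1}{\lambda}$, and $\theta_i = R_i^{1/\alpha} f_{T_i}(t_{\sol})^{-1/\alpha}$ otherwise, with $t_{\sol} = T_i - \frac{x_{\sol}^{\alpha-1}-1}{\lambda}$. Since $x_{\sol}$ does not depend on $T$, both cases combine into the single expression $\theta_i = R_i^{1/\alpha} F(T_i)^{(\alpha-1)/\alpha}$, using the definitions of $F_1,F_2$. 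Because stages are independent and non-exploitative ($P_i=\theta_i$), choosing the $\theta_i$ optimally reduces the scheduling problem to
\begin{align}
\max \sum_{i=1}^N R_i^{\frac{1}{\alpha}} F(T_i)^{\frac{\alpha-1}{\alpha}} \quad \text{s.t.} \quad \sum_i T_i = T,\ \sum_i R_i = R,\ T_i,R_i\ge 0 .
\end{align}

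To solve this, I would first optimize over the rewards for fixed $T_i$. By Hölder's inequality with exponents $\alpha$ and $\alpha/(\alpha-1)$,
\begin{align}
\sum_i R_i^{\frac{1}{\alpha}} F(T_i)^{\frac{\alpha-1}{\alpha}} \le \prn*{\sum_i R_i}^{\frac{1}{\alpha}} \prn*{\sum_i F(T_i)}^{\frac{\alpha-1}{\alpha}} = R^{\frac{1}{\alpha}} \prn*{\sum_i F(T_i)}^{\frac{\alpha-1}{\alpha}},
\end{align}
with equality when $R_i \propto F(T_i)$. It then remains to maximize $\sum_i F(T_i)$ subject to $\sum_i T_i = T$. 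If $F$ is concave on $[0,\infty)$, Jensen's inequality gives $\sum_i F(T_i) \le N F(T/N)$, with equality at $T_i = T/N$. Equal $T_i$ then make the Hölder equality condition $R_i = R/N$, and the objective value is $G(N)$ as claimed. For optimality among all allowed schedules I also need $F(0)=0$, or at least the degenerate limit $T_i\to 0$, to be handled by the same concavity argument.

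The main obstacle is proving that $F$ is concave. Here $F_1(T) = f_T(0)^{-1/(\alpha-1)}$ and $F_2$ is $f_T$ evaluated at its maximizer. Writing $u = 1+\lambda(T-t)$ and $\kappa = \frac{\mu}{\lambda(\alpha-1)}+1$, we have $f_T(t)^{-1/(\alpha-1)} = \frac{(\alpha-1)(u^{\kappa}-1)}{\mu+\lambda(\alpha-1)}\, u^{-\frac{\mu}{\lambda(\alpha-1)}} \exp\prn*{\frac{\alpha\lambda\kappa}{\alpha-1}\eta(t)}$. For $F_2$, with $t = T - \frac{x_{\sol}^{\alpha-1}-1}{\lambda}$, the value $u = x_{\sol}^{\alpha-1}$ is constant. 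So only the factor $\exp(c\,\eta(t))$ varies with $T$, and its derivative is governed by $\eta$'s integrand evaluated at the constant $u$. This suggests that $F_2$ is affine in $T$, with slope equal to $F_1'$ at the threshold. Indeed $\frac{d}{dT}\eta_T(T-h) = 1/((1+\lambda h)^{\kappa}-1)$ for fixed $h$, up to a remainder involving the $T$-dependence of $\eta$'s integrand, which I would track carefully. I would then verify that $F_1$ is concave on $[0, \frac{x_{\sol}^{\alpha-1}-1}{\lambda}]$ by differentiating twice, where $\eta_T(0) = \int_0^T \frac{dr}{(1+\lambda r)^{\kappa}-1}$ after the substitution $r = T-s$. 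I would also check that $F_1$ and $F_2$ join with matching value and derivative at the threshold. The equation \cref{eq:x_0 equation} defining $x_{\sol}$ should be exactly the first-order condition making $F_1'$ equal to the tangent slope there, which yields the $C^1$ concave gluing. I expect this second-derivative computation for $F_1$, with its integral term, to be the hardest part, and I would attack it by rewriting $F_1' / F_1$ in terms of $u = 1+\lambda T$ and showing it is decreasing in $u$.
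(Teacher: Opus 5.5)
Your overall route is the paper's: use \cref{thm:optimal goal non-exploitatie} per stage to get $\theta_i=R_i^{1/\alpha}F(T_i)^{(\alpha-1)/\alpha}$, apply H\"older in the $R_i$, then apply Jensen once $F$ is concave. You are also right that \cref{eq:x_0 equation} is exactly what makes $F_1'$ and $F_2'$ agree at $h_0\coloneqq\frac{x_{\sol}^{\alpha-1}-1}{\lambda}$. The gap is in the concavity of $F$ itself. There you have the easy and hard pieces reversed, and you assert a false structural property of $F_2$.

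First, $F_1$ involves no integral. Since $\eta(0)=\int_0^0(\cdot)\,ds=0$,
$F_1(T)=\frac{\alpha-1}{\mu+\lambda(\alpha-1)}\bigl(1+\lambda T-(1+\lambda T)^{-\frac{\mu}{\lambda(\alpha-1)}}\bigr)$, and directly $F_1''(T)=-\frac{\mu}{\alpha-1}(1+\lambda T)^{-\frac{\mu}{\lambda(\alpha-1)}-2}<0$. The integral $\int_0^T\frac{dr}{(1+\lambda r)^\kappa-1}$ that you attach to $F_1$ is $\eta_T(T)$, not $\eta_T(0)$, and it diverges at $r=0$. Moreover, your planned test ($F_1'/F_1$ decreasing) would only give log-concavity, which does not imply concavity.

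Second, $F_2$ is not affine. The substitution $r=T-s$ gives $\eta_T(T-h_0)=\int_{h_0}^{T}\frac{dr}{(1+\lambda r)^\kappa-1}$. So the remainder you planned to track exactly cancels the term $1/((1+\lambda h_0)^\kappa-1)$, and
\[
F_2'(T)=\frac{\alpha\lambda\kappa}{\alpha-1}\cdot\frac{F_2(T)}{(1+\lambda T)^\kappa-1},
\]
which varies with $T$. Differentiating once more, $F_2''(T)$ has the sign of $\frac{\alpha}{\alpha-1}-(1+\lambda T)^{\kappa-1}$, where $\kappa-1=\frac{\mu}{\lambda(\alpha-1)}$.

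So the real content of the concavity lemma, which is missing from your plan, is showing $(1+\lambda T)^{\kappa-1}>\frac{\alpha}{\alpha-1}$ for all $T>h_0$. For $T>h_0$ you have $(1+\lambda T)^{\kappa-1}>x_{\sol}^{\mu/\lambda}$, so it suffices that $x_{\sol}^{\mu/\lambda}>\frac{\alpha}{\alpha-1}$. This follows by rewriting \cref{eq:x_0 equation} as
\[
x_{\sol}^{\mu/\lambda}-\frac{\alpha}{\alpha-1}=\frac{\mu}{\lambda(\alpha-1)}\Bigl(\frac{\alpha}{\alpha-1}-x_{\sol}^{-(\alpha-1)}\Bigr)>0 .
\]
Equivalently, $x_{\sol}$ lies to the right of the minimizer $(\frac{\alpha}{\alpha-1})^{\lambda/\mu}$ of the auxiliary function whose root defines $x_{\sol}$ in the proof of \cref{thm:shape of f_T(t)}. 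This is where the specific value of $x_{\sol}$ enters a second time, beyond the $C^1$ matching. With $F_1$ and $F_2$ both concave and glued $C^1$ at $h_0$, $F$ is concave, and the rest of your argument goes through.
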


\cref{thm:orsp fixed N} implies that when we fix $N$, we can write the optimal reward schedule in closed form. Furthermore, it is optimal to evenly distribute the total rewards and provide them at regular intervals. 

\begin{theorem} \label{thm:G(N)}
$G(N)$ is monotonically increasing with respect to $N$ and 
\begin{align}
    \lim_{N \to \infty} G(N) = R^{\frac{1}{\alpha}} T^{\frac{\alpha - 1}{\alpha}}. 
\end{align}
\end{theorem}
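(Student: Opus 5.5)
The plan is to reduce both claims to one property of $F$: the ratio $F(s)/s$ should be non-increasing in $s>0$ and tend to $1$ as $s\to+0$. Granting this, write
\[
N\,F\!\left(\tfrac{T}{N}\right)=T\cdot\frac{F(T/N)}{T/N}.
\]
As $N$ increases, $s=T/N$ decreases, so the ratio does not decrease and $G(N)$ is monotonically increasing. Moreover $N F(T/N)\to T$, which gives $\lim_{N\to\infty}G(N)=R^{\frac1\alpha}T^{\frac{\alpha-1}{\alpha}}$. For the limit only the branch $F_1$ is needed, because $T/N\le \frac{x_{\sol}^{\alpha-1}-1}{\lambda}$ for all large $N$.

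\textbf{Branch $F_1$.} Put $p\coloneqq\frac{\mu}{\lambda(\alpha-1)}+1>1$ and $u\coloneqq 1+\lambda s$, and note that $\frac{\mu}{\alpha-1}+\lambda=\lambda p$. Since $\eta(0)=0$, the definition of $f_s(0)$ in \cref{eq:def f} gives
\[
F_1(s)=\frac{(u^p-1)\,u^{-(p-1)}}{\lambda p}=\frac{\varphi(u)}{\lambda p},\qquad \varphi(u)\coloneqq u-u^{1-p}.
\]
Hence $F_1(s)/s=\varphi(u)/(p(u-1))$. Here $\varphi(1)=0$ and $\varphi''(u)=-p(p-1)u^{-p-1}<0$, so $\varphi$ is concave. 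The secant slope $\varphi(u)/(u-1)$ is therefore non-increasing in $u$, and it tends to $\varphi'(1)=p$ as $u\to1$. This shows $F_1(s)/s$ is non-increasing with limit $1$ as $s\to+0$.

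\textbf{Branch $F_2$.} Write $s_c\coloneqq\frac{x_{\sol}^{\alpha-1}-1}{\lambda}$. I would substitute $\sigma=s-t$ (time remaining) into $f_s(t)$. Then $\eta(t)=\int_{\sigma}^{s}\frac{dr}{(1+\lambda r)^p-1}$, so
\[
f_s(t)=A(\sigma)\exp\!\left(-\alpha\lambda p\int_\sigma^s\frac{dr}{(1+\lambda r)^p-1}\right),
\]
where $A$ depends only on $\sigma$. By \cref{thm:shape of f_T(t)}, for $s>s_c$ the maximiser is at $\sigma=s_c$, independent of $s$. Therefore
\[
\frac{d}{ds}\log F_2(s)=\frac{\alpha\lambda p}{(\alpha-1)\big((1+\lambda s)^p-1\big)}.
\]
So $F_2(s)/s$ is non-increasing exactly when
\[
\alpha\lambda p\, s\le(\alpha-1)\big((1+\lambda s)^p-1\big)\quad\text{for all } s\ge s_c .
\]
The right side minus the left side is convex in $s$, so it suffices to check this at $s=s_c$ together with the sign of the derivative there. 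I would verify this using the defining equation \cref{eq:x_0 equation} for $x_{\sol}$, where $1+\lambda s_c=x_{\sol}^{\alpha-1}$. The relation to exploit is that $t_{\sol}$ is the stationary point of $f_T$. Equivalently, differentiating $\log f_s(t)$ in $\sigma$ and setting it to zero at $\sigma=s_c$ ties $A'(s_c)/A(s_c)$ to $\alpha\lambda p/((1+\lambda s_c)^p-1)$.

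\textbf{Gluing the branches.} $F$ is continuous at $s_c$, because $t_{\sol}=0$ there and both formulas coincide. So $\log F(s)-\log s$ has a non-positive derivative on each piece, and the two pieces join continuously; $F(s)/s$ is then non-increasing on all of $(0,\infty)$.

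I expect the main obstacle to be the inequality at $s=s_c$, and its derivative, for the $F_2$ branch. My first attempt would be to rewrite \cref{eq:x_0 equation} in terms of $u_c=x_{\sol}^{\alpha-1}$ and compare it directly with $\alpha\lambda p\, s_c$ versus $(\alpha-1)(u_c^p-1)$. If that fails, I would fall back on an envelope argument: $F(s)^{-(\alpha-1)}=\max_{\sigma\in[0,s]}$ of the expression above. Then $\frac{d}{ds}\log F(s)$ is bounded by the derivative of the $F_1$ branch evaluated at the same remaining time, whose ratio property is already established.
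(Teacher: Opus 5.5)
Your proposal is correct and follows essentially the paper's proof: the same split into the $F_1$ and $F_2$ branches, the same concavity argument for $F_1$ (the paper's mean-value step with the convex function $x^{-\nu}$), and for $F_2$ the same inequality $D(s)\coloneqq(\alpha-1)\bigl((1+\lambda s)^p-1\bigr)-\alpha\lambda p s\ge 0$, which is the paper's $g(N)\ge 0$ under the rescaling $g(N)=(\alpha-1)N\,D(T/N)$. The step you left open does close: substituting the defining equation of $x_{\sol}$ gives exactly $D(s_c)=p>0$, and since $D(0)=0$ and $D$ is convex, $D(s)/s$ is non-decreasing, so $D(s)\ge (s/s_c)\,D(s_c)>0$ for $s\ge s_c$ without a separate check of $D'(s_c)$ (this is what the paper's monotonicity of $g$ in $N$ encodes).
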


\cref{thm:G(N)} indicates that finer divisions of rewards and periods lead to greater total progress. However, there is an upper bound to the total progress, and as $N$ increases, the total progress converges to this upper bound.

An important observation is that the limit $R^{\frac{1}{\alpha}} T^{\frac{\alpha - 1}{\alpha}}$ does not depend on the discount function parameters $\lambda$ and $\mu$. 
This implies that providing intermediate rewards at sufficiently fine intervals can counteract the effects of the discount function. 
In practical situations, it is impossible to divide rewards infinitely finely, and there are often constraints on the number of divisions. Even in such cases, splitting the rewards as finely as possible can mitigate the effects of time discounting and maximize the attainable progress.

This result contrasts with the previous findings under quasi-hyperbolic discounting; \cite{akagi2024analytically,akagi2026delta} demonstrated that under discrete time quasi-hyperbolic discounting, varying the reward intervals based on the discount-function parameter is optimal. For specific parameters, providing the rewards in a lump sum (i.e., $N=1$) is optimal, whereas for other parameters, dividing the rewards as finely as possible is optimal. In contrast, in the continuous-time model proposed in this paper, the optimal strategy is to divide the rewards as finely as possible, regardless of the parameters of the discount function. 
This contrast suggests that the choice of time model and discounting model can qualitatively affect the structure of optimal reward schedules. A more detailed comparison between discrete-time and continuous-time formulations is an interesting direction for future work.

\section{Conclusion} \label{sec:conclusion}
This paper developed a tractable continuous-time framework for designing interventions for time-inconsistent agents in progress-based tasks. By formulating the agent's behavior through a variational principle, we obtained a concise analytical representation of the realized trajectory under generalized hyperbolic discounting. This representation enabled us to characterize when the agent completes the task, abandons it immediately, or exhibits time-inconsistent abandonment after making partial progress.

We then used this characterization to study two intervention design problems. For optimal goal setting, we derived optimal goals both with and without exploitative rewards, and identified conditions under which exploitative rewards are ineffective. For optimal reward scheduling, we showed that equal time intervals and equal reward allocations are optimal when the number of stages is fixed, and that finer reward splitting monotonically improves final progress up to a discount-independent limit. These results provide explicit design rules for interventions in continuous-time progress-based tasks and reveal qualitative differences from existing discrete-time models.

Several limitations remain. The model focuses on one-dimensional, nondecreasing progress, which is essential for tractability but excludes more general sequential decision problems with branching states, uncertainty, or decreasing progress. Empirical validation is also an important direction for future work. Testing whether the predicted abandonment regimes and optimal intervention structures arise in human behavior would clarify the practical scope of the proposed framework.

\section{Proofs} \label{sec:proofs}
\subsection{Proof of \cref{thm:trajectory}} \label{subsec:proof of thm trajectory}
\begin{proof}
First, we review an existing result from the literature on the calculus of variations \citep{rockafellar2001convex}.
Let us denote by $\mathcal{A}^p[a, b]$ the set of functions that are absolutely continuous on the interval $[a, b]$ and whose derivatives are $p$-integrable, where $p \in [1, \infty]$. 
We consider the following variational problem; 
\begin{align}
    \label{eq:variational problem strict}
    \begin{aligned}
    \underset{y \in \mathcal{A}^p}{\mathrm{min.}}& \quad \int_{a}^{b} f(s, y, \diff{y}(x)) ds \\
    \mathrm{s.t.}& \quad y(a) = A, \ y(b) = B, 
    \end{aligned}
\end{align}
where $f(x_0, x_1, x_2)$ is continuously differentiable. 
Write 
\begin{align}
f_i(x_0, x_1, x_2) = \frac{\partial}{\partial x_i} f(x_0, x_1, x_2)\ (i=0, 1, 2).
\end{align}
Then, the following theorem holds. 

\begin{theorem}[\citep{rockafellar2001convex}, Theorem 3] \label{thm:thm:variational_optimality_strict}
If the function $f(x_0, x_1, x_2)$ is convex with respect to $(x_1, x_2)$ and there exist functions $y \in \mathcal{A}^p[a, b]$ and $z \in \mathcal{A}^1[a, b]$ that satisfy
\begin{align}
    &z(s) = f_2(s, y, \diff{y}),\ z'(s) =f_1(s, y, \diff{y}), \label{eq:Euler-Lagrange_strict} \\
    &y(a) = A,\ y(b) = B, 
\end{align}
then $y$ is an optimal solution of \cref{eq:variational problem strict}. 
\end{theorem}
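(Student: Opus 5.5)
The plan is the classical \emph{convexity-plus-integration-by-parts} argument: compare the candidate $y$ with an arbitrary feasible competitor $\hat y \in \mathcal{A}^p[a,b]$ satisfying $\hat y(a)=A$ and $\hat y(b)=B$, and show that
\begin{align}
\int_a^b f(s,\hat y(s),\hat y'(s))\,ds \;\geq\; \int_a^b f(s,y(s),y'(s))\,ds.
\end{align}
No existence theory is needed, because the pair $(y,z)$ solving the Euler--Lagrange system is given in the hypothesis. It suffices to turn that system into a pointwise supporting-hyperplane inequality and then integrate it.

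\textbf{Step 1 (pointwise inequality).} Since $f(s,\cdot,\cdot)$ is convex and continuously differentiable in $(x_1,x_2)$, the gradient inequality holds at every $s$:
\begin{align}
f(s,\hat y,\hat y') \geq f(s,y,y') + f_1(s,y,y')(\hat y - y) + f_2(s,y,y')(\hat y' - y').
\end{align}
This holds for almost every $s$, namely wherever both derivatives exist.

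\textbf{Step 2 (Euler--Lagrange substitution).} Using \cref{eq:Euler-Lagrange_strict}, the linear term becomes
\begin{align}
z'(s)\,(\hat y - y)(s) + z(s)\,(\hat y - y)'(s) = \frac{d}{ds}\bigl[z(s)(\hat y(s)-y(s))\bigr] \quad \text{a.e.}
\end{align}
This identity is valid because a product of two absolutely continuous functions on a compact interval is again absolutely continuous, and its derivative obeys the product rule almost everywhere.

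\textbf{Step 3 (integration).} Integrate the pointwise inequality over $[a,b]$. By absolute continuity of $z(\hat y - y)$, the integral of the linear term equals
\begin{align}
z(b)\bigl(\hat y(b)-y(b)\bigr) - z(a)\bigl(\hat y(a)-y(a)\bigr).
\end{align}
Both factors $\hat y - y$ vanish at the endpoints, since $\hat y(a)=y(a)=A$ and $\hat y(b)=y(b)=B$. Hence the linear term integrates to $0$, which yields the desired inequality. Since $\hat y$ was arbitrary, $y$ is optimal for \cref{eq:variational problem strict}.

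\textbf{Main obstacle: integrability bookkeeping.} I expect the only delicate point to be justifying the integration in Step 3. The linear term is in $L^1$ for the following reasons.
\begin{itemize}
\item $z$ and $\hat y - y$ are continuous on $[a,b]$, hence bounded.
\item $z'$ and $(\hat y - y)'$ are integrable, since all the functions involved are absolutely continuous.
\end{itemize}
Therefore the linear term is integrable and splitting the integral is legitimate. The remaining concern is the objective values themselves. I would handle this by interpreting $\int_a^b f(s,y,y')\,ds$ as a well-defined extended real number. If the integral at $\hat y$ is $+\infty$, there is nothing to prove. Otherwise, the pointwise inequality shows that the negative part of $f(s,y,y')$ is dominated by an integrable function. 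The integral at $y$ is then well defined and bounded above by the integral at $\hat y$, so the comparison is rigorous in all cases.
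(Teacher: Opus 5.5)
The paper does not prove this statement; it imports it from Rockafellar. Your argument is the standard proof behind that citation, and it is sound: the gradient inequality for the convex $C^1$ integrand, then substitution of the Euler--Lagrange pair $(z',z)=(f_1,f_2)$ so the linear term becomes $\frac{d}{ds}[z(\hat y-y)]$, then integration with vanishing boundary terms.

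One sentence in your integrability paragraph is backwards. Set $\ell(s) \coloneqq z'(s)(\hat y-y)(s)+z(s)(\hat y-y)'(s)$, which you correctly showed is in $L^1$. The pointwise inequality $f(s,\hat y,\hat y')\geq f(s,y,y')+\ell(s)$ bounds $f(s,y,y')$ from \emph{above}. It gives $f(s,y,y')^{+}\leq f(s,\hat y,\hat y')^{+}+\abs{\ell(s)}$, so it controls the \emph{positive} part of $f(s,y,y')$ and says nothing about the negative part.

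With that correction the argument closes. Adopt the usual convention in convex calculus of variations: an integral is $+\infty$ unless the positive part of the integrand is summable. Then your ``otherwise'' case means exactly $f(\cdot,\hat y,\hat y')^{+}\in L^1$, and competitors with an undefined integral are also handled. In that case $\int_a^b f(s,y,y')\,ds\in[-\infty,\infty)$ is well defined. Moreover, $\int_a^b f(s,y,y')\,ds\leq\int_a^b f(s,\hat y,\hat y')\,ds-\int_a^b \ell\,ds=\int_a^b f(s,\hat y,\hat y')\,ds$, since $\int_a^b\ell\,ds=0$.
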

The equations \cref{eq:Euler-Lagrange_strict} are called Euler--Lagrange equation. 

Next, we move to the proof of \cref{thm:trajectory}. 
A function $y$ that minimizes $\mathcal{P}[y]$ satisfies $y(T) = \theta$ or $y(T)=x(t)$. 
This is because $y(T) = \theta$ holds when $y(T) \geq \theta$ and $y(T) = x(t)$ holds otherwise from the definitions of functional $\mathcal{P}[y]$ and cost function $c(\Delta)$. 

(i) When $y(T) = x(t)$, it is clear that the optimal choice is $y(s) = x(t)$ for $s \in [t, T]$, and in this case, $\mathcal{P}[y]=0$. 

(ii) When $y(T) = \theta$, because 
\begin{align}
    \mathcal{P}[y] = \int_{t}^T L_t(s, \diff{y}(s)) ds - \dgamma{t}{T} R 
\end{align}
holds, it is sufficient to solve the problem 
\begin{align}
    \min_{y \in \mathcal{A}^1[t, T]} \int_{t}^T L_t(s, \diff{y}(s)) ds \quad \mathrm{s.t.}\  y(t) = x(t),\  y(T) = \theta. \label{eq:variational problem}
\end{align}
Note that $\mathcal{A}^1[t, T]$ consists of all absolutely continuous function in $[t, T]$.  

Here, we consider a new variational problem 
\begin{align}
    \min_{y \in \mathcal{A}^1[t, T]} \int_{t}^T \tilde{L}_t(s, \diff{y}(s)) ds \quad \mathrm{s.t.}\  y(t) = x(t),\  y(T) = \theta, \label{eq:variational problem tilde}
\end{align}
where 
\begin{align}
    \tilde{c}(\Delta) &\coloneqq \abs{\Delta}^{\alpha}, \\
    \tilde{L}_t(s, v) &\coloneqq \gamma_t(s) \tilde{c}(v). 
\end{align}
In this problem, if an optimal solution exists for problem \cref{eq:variational problem tilde}, then an optimal solution also exists for problem \cref{eq:variational problem}, and the optimal solutions coincide. Indeed, any optimal solution to problem \cref{eq:variational problem tilde} must be non-decreasing. Otherwise, replacing the trajectory by a monotone path with the same endpoints would strictly decrease the objective function, 
contradicting optimality. Therefore, solving problem \cref{eq:variational problem tilde} is equivalent to solving problem \cref{eq:variational problem}, and we henceforth consider only problem \cref{eq:variational problem tilde}.
The function $\tilde{L}_t(s, \diff{y}(s)) = \gamma_t(s) \abs{\diff{y}(s)}^{\alpha}$ is continuously differentiable and convex with respect to $(y(s), \diff{y}(s))$. Also, 
\begin{align}
    y(s) &= x(t) + (\theta - x(t)) \cdot \frac{\Gamma_t(s) - \Gamma_t(t)}{\Gamma_t(T) - \Gamma_t(t)} \label{eq:optimal_y}, \\
    z(s) &= \alpha \prn*{\frac{\theta - x(t)}{\Gamma_t(T) - \Gamma_t(t)}}^{\alpha-1}
\end{align}
satisfy the Euler--Lagrange equation \cref{eq:Euler-Lagrange_strict} and $y, z \in \mathcal{A}^1[a, b]$. 
From \cref{thm:thm:variational_optimality_strict}, \cref{eq:variational problem tilde} has an optimal solution, and the solution is given by \cref{eq:optimal_y}. 
Thus, \cref{eq:optimal_y} is also an optimal solution of \cref{eq:variational problem}
, and in this case a simple calculation gives 
\begin{align}
    \mathcal{P}[y] &=  \frac{(\theta - x(t))^{\alpha} - \zeta(t) R }{(\Gamma_t(T) - \Gamma_t(t))^{\alpha-1}}. 
\end{align} 
From (i) and (ii), we have
\begin{align} \label{eq:y^*_solved}
    y^*_{t, x(t)}(s) =
    \begin{cases}
         x(t) & (\theta - x(t))^{\alpha} - \zeta(t) R > 0, \\
        x(t) +  \frac{\prn*{\theta - x(t)} \prn*{\Gamma_t(s) - \Gamma_t(t)}}{\Gamma_t(T) - \Gamma_t(t)} & \mathrm{otherwise}. 
    \end{cases}
\end{align}

Next, we solve \cref{eq:agent_ODE}. By differentiating \cref{eq:y^*_solved}, we get 
\begin{align}
    \frac{d y^*_{t, x(t)}(s)}{ds} =
    \begin{cases}
         0 & (\theta - x(t))^{\alpha} - \zeta(t) R > 0, \\
        (\theta - x(t)) \cdot \frac{\gamma_t(s)^{-\frac{1}{\alpha-1}}}{\Gamma_t(T) - \Gamma_t(t)} & \mathrm{otherwise}.
    \end{cases}
\end{align}
If the condition $(\theta - x(\hat{t}))^{\alpha} - \zeta(\hat{t}) R > 0$ is satisfied for a certain $\hat{t}$, then it follows from the form of \cref{eq:agent_ODE} and non-increasingness of $\zeta(t)$, that $x(t)$ remains constant for $t \in \bra*{\hat{t}, T}$. 
Let $\mt$ denote the infimum of all such $\hat{t}$. 
Then, 
\begin{align}
    \frac{d y^*_{t, x(t)}(s)}{ds} = (\theta - x(t)) \cdot \frac{\gamma_t(s)^{-\frac{1}{\alpha-1}}}{\Gamma_t(T) - \Gamma_t(t)} \label{eq:agent_ODE_simple}
\end{align}
holds for $t \in \bra*{0, \mt}$. 
In this case, solving \cref{eq:agent_ODE} under the condition $x(0) = 0$ yields
\begin{align}
    x(t) = 
    \theta \prn*{1 - \exp \prn*{-\int_0^{t} \frac{ds}{\Gamma_{s}(T) - \Gamma_{s}(s)}}}. 
\end{align}
Because
\begin{align}
    (\theta - x(t))^{\alpha} - \zeta(t) R =
    \theta^{\alpha} \exp \prn*{-\alpha \int_0^{t} \frac{ds}{\Gamma_{s}(T) - \Gamma_{s}(s)}} - \zeta(t) R
\end{align}
holds, $\mt$ is the minimum $t$ that satisfies \cref{eq:abandonment_condition}. 
\end{proof}

\subsection{Proof of \cref{thm:zeta non-increasing}}
\begin{proof}
A simple calculation yields
\begin{align}
    \Gamma_t(s) &= \frac{\alpha-1}{\mu + \lambda(\alpha-1)} \prn*{1 + \lambda(s-t)}^{\frac{\mu}{\lambda(\alpha-1)} + 1}, \\
    \zeta(t) &= \prn*{\frac{\alpha-1}{\mu + \lambda (\alpha-1)}}^{\alpha-1} \prn*{1 + \lambda (T - t)}^{-\frac{\mu}{\lambda}} \bra*{\prn*{1 + \lambda(T-t)}^{\frac{\mu}{\lambda(\alpha-1)} + 1}-1}^{\alpha-1}. 
\end{align}
Since $\frac{\alpha-1}{\mu + \lambda (\alpha-1)} > 0$, it suffices to show that
\begin{align}
    \tilde{\zeta}(x) = x^{-\nu} \prn*{x^{1 + \frac{\nu}{\alpha-1}}-1}^{\alpha-1}
\end{align}
is non-decreasing for $x \ge 1$, where we define $x \coloneqq 1 + \lambda (T - t)$ and $\nu \coloneqq \frac{\mu}{\lambda}$.
Indeed,
\begin{align}
    \frac{d \tilde{\zeta}(x)}{dx} = x^{-\nu-1} \prn*{x^{\frac{\nu}{\alpha - 1}} - 1}^{\alpha - 2} \prn*{(\alpha - 1)x^{\frac{\nu}{\alpha - 1}} + \nu}
\end{align}
is nonnegative for $x \ge 1$ because $\alpha > 1$ and $\nu > 0$, which implies that $\tilde{\zeta}(x)$ is non-decreasing for $x \ge 1$.
\end{proof}

\subsection{Proof of \cref{thm:shape of f_T(t)}}
\begin{proof}
We first compute the derivative of $\log f_T(t)$:
\begin{align}
    \log f_T(t)
    &= (\alpha - 1) \log \prn*{\frac{\mu}{\alpha - 1} + \lambda}
    - (\alpha - 1) \log \bra*{\prn*{1 + \lambda(T-t)}^{\frac{\mu}{\lambda(\alpha-1)} + 1}-1}\\
     &\quad + \frac{\mu}{\lambda} \log (1 + \lambda (T - t))
    - \alpha \prn*{\frac{\mu}{\alpha-1} + \lambda}
    \int_0^{t} \frac{ds}{\prn*{1 + \lambda(T-s)}^{\frac{\mu}{\lambda(\alpha-1)} + 1} - 1},
\end{align}
and hence
\begin{align}
    \frac{d \log f_T(t)}{dt}
    &= \lambda \cdot
    \frac{(\alpha - 1) x^{\frac{\mu}{\lambda} + \alpha - 1}
    - \alpha \prn*{\frac{\mu}{\lambda} \frac{1}{\alpha - 1} + 1}x^{\alpha - 1}
    + \frac{\mu}{\lambda}}
    {x^{\alpha - 1} \prn*{x^{\frac{\mu}{\lambda} + \alpha - 1} - 1}}. 
    \label{eq:diff log f}
\end{align}
where 
$
    x \coloneqq \prn*{1 + \lambda(T-t)}^{\frac{1}{\alpha-1}}.
$
Since the denominator in \cref{eq:diff log f} is positive for $x>1$, the sign of $\frac{d \log f_T(t)}{dt}$ is determined by the sign of 
\begin{align}
    g(x) \coloneqq
    (\alpha - 1) x^{\frac{\mu}{\lambda} + \alpha - 1}
    - \alpha \prn*{\frac{\mu}{\lambda} \frac{1}{\alpha - 1} + 1}x^{\alpha - 1}
    + \frac{\mu}{\lambda}.
\end{align}
Because
\begin{align}
    \frac{dg(x)}{dx}
    = \prn*{\frac{\mu}{\lambda} + \alpha - 1} x^{\alpha - 2}
    \prn*{(\alpha - 1)x^{\frac{\mu}{\lambda}} - \alpha}, 
\end{align}
$g(x)$ is strictly decreasing for
$
    1 \le x < \prn*{\frac{\alpha}{\alpha - 1}}^{\frac{\lambda}{\mu}},
$
and strictly increasing for
$
    x > \prn*{\frac{\alpha}{\alpha - 1}}^{\frac{\lambda}{\mu}}.
$
Moreover,
\begin{align}
    g(1) &= -1 - \frac{\mu}{\lambda(\alpha - 1)} < 0, \\
    \lim_{x \to +\infty} g(x) &= +\infty.
\end{align}
Hence, the equation $g(x)=0$ has a unique solution for $x>1$. Denote this solution by $x_{\sol}$. Then $g(x) < 0$ for $1 < x < x_{\sol}$, $g(x_{\sol}) = 0$, and $g(x) > 0$ for $x > x_{\sol}$. 
Thus, the monotonicity table for $\log f_T(t)$ can be written as follows using $t_{\sol} \coloneqq T - \frac{x_{\sol}^{\alpha - 1}-1}{\lambda}$:

(i) If $t_{\sol} > 0$, 

\vspace{0.5\baselineskip}
\begin{tblr}{
    width = {0.95\linewidth},
    hline{1,Z} = {0.08em},
    hline{2,Y},
    vlines,
    columns = { mode = dmath, halign = c, co = 1 },
  }
  t               & 0 & \cdots           & t_{\sol}  & \cdots & T         \\
  \frac{d \log f_T(t)}{dt}              &   & +                & 0  & -      &          \\
  \log f_T(t)    &   & \nearrow &  & \searrow &  \\
\end{tblr}
\vspace{0.5\baselineskip}

(ii) If $t_{\sol} \leq 0$, 

\vspace{0.7\baselineskip}
\begin{tblr}{
    width = {0.9\linewidth},
    hline{1,Z} = {0.08em},
    hline{2,Y},
    vlines,
    columns = { mode = dmath, halign = c, co = 1 },
  }
  t       & 0 & \cdots & T         \\
  \frac{d \log f_T(t)}{dt}   &   &    -   &           \\
  \log f_T(t)    &   & \searrow &     \\
\end{tblr}
\vspace{0.5\baselineskip}

The claim follows immediately from these monotonicity tables.

\end{proof}
\subsection{Proof of \cref{thm:x convergence}}
\begin{proof}

Because $x_{\sol}$ is the solution of \cref{eq:x_0 equation}, we have 
\begin{align}
(\alpha-1)x_{\sol}^{\frac{\mu}{\lambda}+\alpha-1}
=
\alpha\left(\frac{\mu}{\lambda(\alpha-1)}+1\right)x_{\sol}^{\alpha-1}
-\frac{\mu}{\lambda}.
\end{align}
Dividing both sides by \(x_{\sol}^{\alpha-1}\) yields
\begin{align}
(\alpha-1)x_{\sol}^{\frac{\mu}{\lambda}}
&=
\alpha\left(\frac{\mu}{\lambda(\alpha-1)}+1\right)
-\frac{\mu}{\lambda}x_{\sol}^{-(\alpha-1)} \\
&= \frac{\mu}{\lambda(\alpha-1)}+\alpha
+
\frac{\mu}{\lambda}\cdot
\frac{x_{\sol}^{\alpha-1}-1}{x_{\sol}^{\alpha-1}}.
\end{align}
Hence,
\begin{align}
\frac{\mu}{\lambda}\cdot
\frac{x_{\sol}^{\alpha-1}-1}{x_{\sol}^{\alpha-1}}
=
(\alpha-1)x_{\sol}^{\frac{\mu}{\lambda}}
-\frac{\mu}{\lambda(\alpha-1)}-\alpha.
\end{align}
Since \(x_{\sol}>1\), the left-hand side is positive, and therefore
\begin{align}
(\alpha-1)x_{\sol}^{\frac{\mu}{\lambda}}
\ge
\frac{\mu}{\lambda(\alpha-1)}+\alpha.
\end{align}
It follows that
\begin{align}
x_{\sol}^{\frac{\mu}{\lambda}}
\ge
\frac{\mu}{(\alpha-1)^2\lambda}+\frac{\alpha}{\alpha-1}.
\end{align}
Taking logarithms, we obtain
\begin{align}
\log x_{\sol}
\ge
\frac{\lambda}{\mu}
\log\left(
\frac{\mu}{(\alpha-1)^2\lambda}+\frac{\alpha}{\alpha-1}
\right).
\end{align}
Since \(x_{\sol}>1\) and \(\alpha>1\), we have \((\alpha-1)\log x_{\sol}>0\), and thus
\begin{align}
x_{\sol}^{\alpha-1}-1
=
e^{(\alpha-1)\log x_{\sol}}-1
\ge
(\alpha-1)\log x_{\sol},
\end{align}
where we used the elementary inequality \(e^t-1\ge t\) for \(t\ge 0\). Combining this with the previous bound gives
\begin{align}
\frac{x_{\sol}^{\alpha-1}-1}{\lambda}
\ge
\frac{\alpha-1}{\mu}
\log\left(
\frac{\mu}{(\alpha-1)^2\lambda}+\frac{\alpha}{\alpha-1}
\right).
\end{align}
As \(\lambda\to 0\), the quantity inside the logarithm diverges to \(+\infty\), and hence
\begin{align}
\log\left(
\frac{\mu}{(\alpha-1)^2\lambda}+\frac{\alpha}{\alpha-1}
\right)\to\infty.
\end{align}
Therefore,
\begin{align}
\frac{x_{\sol}^{\alpha-1}-1}{\lambda}\to\infty
\qquad \text{as } \lambda\to 0.
\end{align}
This completes the proof.
\end{proof}

\subsection{Proof of \cref{thm:optimal goal non-exploitatie}}
\begin{proof}
By \cref{eq:trajectory_ghyp}, the condition $x(T)=\theta$ holds if and only if $t^*=T$.
If $t_{\sol} \leq 0$, then \cref{eq:t star tsol small} implies that $t^*=T$ is equivalent to $f_T(0) \leq \frac{R}{\theta^{\alpha}}$. Therefore, $x(T)=\theta$ is maximized when $\theta = R^{\frac{1}{\alpha}} f_T(0)^{-\frac{1}{\alpha}}$.
If $t_{\sol} > 0$, then \cref{eq:t star tsol large} implies that $t^*=T$ is equivalent to $f_T(t_{\sol}) \leq \frac{R}{\theta^{\alpha}}$. Therefore, $x(T)=\theta$ is maximized when $\theta = R^{\frac{1}{\alpha}} f_T(t_{\sol})^{-\frac{1}{\alpha}}$.
\end{proof}

\subsection{Proof of \cref{thm:f_T decreasing}}
\begin{proof}
From \cref{eq:def f}, we have 
\begin{align}
f_T(0) &= \prn*{\frac{\mu + \lambda (\alpha - 1)}{(\alpha -1) \bra*{\prn*{1+\lambda T}^{\frac{\mu}{\lambda (\alpha -1)}+1}-1}}}^{\alpha - 1} \prn*{1 + \lambda T}^{\frac{\mu}{\lambda}}, \\
f_T(t_{\sol}) &= \prn*{\frac{\mu + \lambda(\alpha - 1)}{(\alpha-1)\bra*{x_0^{\frac{\mu}{\lambda} + \alpha - 1}-1}}}^{\alpha-1} x_{\sol}^{\frac{\mu (\alpha -1)}{\lambda}} \cdot \\
&\exp \prn*{- \alpha \prn*{\frac{\mu}{\alpha-1} + \lambda} \int_0^{T - \frac{x_{\sol}^{\alpha - 1}-1}{\lambda}} \frac{ds}{\prn*{1 + \lambda (T-s)}^{\frac{\mu}{\lambda(\alpha - 1)}+1}-1}}. 
\end{align}

We first show that $f_T(0)$ is monotonically decreasing.
Taking the logarithm, we obtain
\begin{align}
\log f_T(0)
&=
(\alpha-1)
\log\left(
\frac{\mu+\lambda(\alpha-1)}
{(\alpha-1)\left(\left(1+\lambda T\right)^{\frac{\mu}{\lambda(\alpha-1)}+1}-1\right)}
\right)
+
\frac{\mu}{\lambda}\log(1+\lambda T).
\end{align}
Differentiating with respect to $T$, we obtain
\begin{align}
\frac{d}{dT}\log f_T(0)
&=
-(\alpha-1)
\frac{
\left(\frac{\mu}{\lambda(\alpha-1)}+1\right)\lambda
(1+\lambda T)^{\frac{\mu}{\lambda(\alpha-1)}}
}{
\left(1+\lambda T\right)^{\frac{\mu}{\lambda(\alpha-1)}+1}-1
}
+
\frac{\mu}{1+\lambda T} \\
&= \frac{
-\mu
-\lambda(\alpha-1)
\left(1+\lambda T\right)^{\frac{\mu}{\lambda(\alpha-1)}+1}
}{
(1+\lambda T)\left(\left(1+\lambda T\right)^{\frac{\mu}{\lambda(\alpha-1)}+1}-1\right)
}.
\end{align}
Since the numerator is strictly negative and the denominator is strictly positive, we obtain
\begin{align}
\frac{d}{dT}\log f_T(0)<0, 
\end{align}
and so $f_T(0)$ is strictly decreasing in $T$.

We next show that $f_T(t_{\sol})$ is strictly decreasing in $T$.
It suffices to show that
\begin{align}
I(T) \coloneqq \int_0^{T - \frac{x_{\sol}^{\alpha - 1}-1}{\lambda}}
\frac{ds}{\left(1 + \lambda (T-s)\right)^{\frac{\mu}{\lambda(\alpha - 1)}+1}-1}
\end{align}
is strictly increasing in $T$.
By setting $u=T-s$, we obtain
\begin{align}
I(T)
&=
\int_{\frac{x_{\sol}^{\alpha - 1}-1}{\lambda}}^{T}
\frac{du}{\left(1 + \lambda u\right)^{\frac{\mu}{\lambda(\alpha - 1)}+1}-1}. \\
I'(T)
&=
\frac{1}{\left(1 + \lambda T\right)^{\frac{\mu}{\lambda(\alpha - 1)}+1}-1}.
\end{align}
Since $\lambda>0$, $\mu>0$, and $\alpha>1$, the denominator is strictly positive, and thus
$
I'(T)>0.
$
Therefore, $I(T)$ is strictly increasing in $T$.
\end{proof}

\subsection{Proof of \cref{thm:optimal goal exploitatie}}
\begin{proof}
If $t_{\sol} \leq 0$, then \cref{eq:t star tsol small} implies that $x(T)=0$ when $\theta > R^{\frac{1}{\alpha}} f_T(0)^{-\frac{1}{\alpha}}$, whereas $x(T)=\theta$ when $\theta \leq R^{\frac{1}{\alpha}} f_T(0)^{-\frac{1}{\alpha}}$. Therefore, the optimal choice is $\theta = R^{\frac{1}{\alpha}} f_T(0)^{-\frac{1}{\alpha}}$.

Next, consider the case $t_{\sol} > 0$.
If $\theta > R^{\frac{1}{\alpha}} f_T(0)^{-\frac{1}{\alpha}}$, then \cref{eq:t star tsol large} implies that $x(T)=0$.

If
$R^{\frac{1}{\alpha}} f_T(t_{\sol})^{-\frac{1}{\alpha}} < \theta \leq R^{\frac{1}{\alpha}} f_T(0)^{-\frac{1}{\alpha}}$,
then $t^*=t'$, where $t'$ is defined in \cref{eq:t' def}, and hence
\begin{align}
    x(T)
    &= \theta \prn*{1 - \exp \prn*{- \prn*{\frac{\mu}{\alpha-1} + \lambda} \eta \prn*{t'}}} \\
    &= R^{\frac{1}{\alpha}} g(t').
\end{align}
Note that, by choosing $\theta$ appropriately, any point in $[0,t_{\sol}]$ can arise as $t'$.

If $\theta \leq R^{\frac{1}{\alpha}} f_T(t_{\sol})^{-\frac{1}{\alpha}}$, then $x(T)=\theta$, so the optimal choice is $\theta = R^{\frac{1}{\alpha}} f_T(t_{\sol})^{-\frac{1}{\alpha}}$.

Therefore, the optimal value of $\theta$ is $R^{\frac{1}{\alpha}} f_T(t_{\sol})^{-\frac{1}{\alpha}}$ if
$
    R^{\frac{1}{\alpha}} f_T(t_{\sol})^{-\frac{1}{\alpha}}
    \geq
    R^{\frac{1}{\alpha}} g(\tau^*),
$
and is $R^{\frac{1}{\alpha}} f_T(\tau^*)^{-\frac{1}{\alpha}}$ otherwise.
\end{proof}

\subsection{Proof of \cref{thm:alpha bound}}
\begin{proof}
It suffices to show that $g(t)$ is monotonically increasing on $(0,t_{\sol}]$. Indeed, if this is the case, then for every $\tau \in [0,t_{\sol}]$,
\begin{align}
    g(\tau) \le g(t_{\sol})
    = f_T(t_{\sol})^{-\frac{1}{\alpha}} \prn*{1 - \exp \prn*{- \prn*{\frac{\mu}{\alpha-1} + \lambda} \eta \prn*{t_{\sol}}}}
    \le f_T(t_{\sol})^{-\frac{1}{\alpha}},
\end{align}
and therefore
$
    f_T(t_{\sol})^{-\frac{1}{\alpha}}
    \geq
     g(\tau^*). 
$

We now prove that $g(t)$ is monotonically increasing on $(0,t_{\sol}]$. Let
\begin{align}
    p(t) \coloneqq 1 + \lambda (T - t), \qquad
    q(t) \coloneqq \exp \prn*{- \prn*{\frac{\mu}{\alpha-1} + \lambda} \eta \prn*{t}}.
\end{align}
Then
\begin{align}
    \log g(t) = -\frac{1}{\alpha} \log f_T(t) + \log \prn*{1 - q(t)},
\end{align}
so
\begin{align}
    \prn*{\log g(t)}' = -\frac{1}{\alpha} \prn*{\log f_T(t)}' - \frac{q'(t)}{1 - q(t)}.
\end{align}
By \cref{eq:diff log f},
\begin{align}
    \prn*{\log f_T(t)}'
    &= \lambda \cdot \frac{(\alpha - 1) p(t)^{\beta} -\alpha \beta p(t) + (\alpha - 1) (\beta - 1)}{p(t) \prn*{p(t)^{\beta}-1}},
\end{align}
where $\beta \coloneqq \frac{\mu}{\lambda (\alpha - 1)} + 1$. Also,
\begin{align}
    q'(t)
    = - \lambda \prn*{\frac{\mu}{\lambda (\alpha - 1)} + 1} \frac{q(t)}{p(t)^{\frac{\mu}{\lambda (\alpha - 1)}+1}-1}
    = - \frac{\lambda \beta q(t)}{p(t)^{\beta}-1}.
\end{align}
Substituting these expressions gives
\begin{align}
    \prn*{\log g(t)}'
    &= \frac{\lambda}{\alpha \prn*{p(t)^{\beta}-1}}
       \prn*{-(\alpha - 1) p(t)^{\beta - 1} - \frac{(\alpha - 1)(\beta - 1)}{p(t)} + \frac{\alpha \beta}{1 - q(t)}}.
\end{align}
It is enough to show that
\begin{align} \label{eq:def h}
    h(t) \coloneqq -(\alpha - 1) p(t)^{\beta - 1} - \frac{(\alpha - 1)(\beta - 1)}{p(t)} + \frac{\alpha \beta}{1 - q(t)}.
\end{align}
is non-negative on $(0,t_{\sol}]$.

To this end, we first prove the following lemma.

\begin{lemma} \label{thm:P(u)>0}
If $1 < \alpha \leq 1 + \frac{\mu + \sqrt{4 \lambda \mu + \mu^2}}{2 \lambda}$, then for
\begin{align}
P(u) &\coloneqq (\alpha \beta - 2 \alpha + 1) u^{2 \beta} + \alpha \beta u^{\beta + 1} -2 (2 \alpha - 1) (\beta - 1) u^{\beta} \\
& \quad + \alpha \beta (\beta - 1) u - (\beta - 1)(\alpha \beta - 2 \alpha - \beta + 1),
\end{align}
we have $P(u) > 0$ for every $u \ge 1$.
\end{lemma}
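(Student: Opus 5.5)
The plan is to convert the hypothesis on $\alpha$ into a sign condition on the leading coefficient of $P$, and then show $P(1)>0$ and $P'(u)>0$ for $u\ge 1$.

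\emph{Step 1: restating the hypothesis.} Put $\nu\coloneqq\mu/\lambda$. By the definition of $\beta$ we have $\nu=(\alpha-1)(\beta-1)$, and $\beta>1$. The hypothesis $\alpha-1\le \frac{\nu+\sqrt{4\nu+\nu^2}}{2}$ says that $\alpha-1$ lies below the positive root of $z^2-\nu z-\nu$. Hence it is equivalent to $(\alpha-1)^2\le \nu(\alpha-1)+\nu=(\alpha-1)(\beta-1)\alpha$. Dividing by $\alpha-1>0$ gives $\alpha-1\le\alpha(\beta-1)$, that is,
\begin{align}
A\coloneqq \alpha\beta-2\alpha+1\ge 0 .
\end{align}
This is the only place the hypothesis on $\alpha$ is used.

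\emph{Step 2: the value at $u=1$.} Direct expansion, grouping $\alpha\beta(\beta-1)-(\beta-1)(\alpha\beta-2\alpha-\beta+1)=(\beta-1)(2\alpha+\beta-1)$, should give $P(1)=\beta^2>0$.

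\emph{Step 3: $P$ is increasing on $[1,\infty)$.} Differentiate and factor out $\beta u^{\beta-1}>0$:
\begin{align}
P'(u)=\beta u^{\beta-1}Q(u),\qquad Q(u)\coloneqq 2A u^{\beta}+\alpha(\beta+1)u-2(2\alpha-1)(\beta-1)+\alpha(\beta-1)u^{1-\beta}.
\end{align}
A direct computation gives $Q(1)=2\beta>0$. Next,
\begin{align}
Q'(u)=2A\beta u^{\beta-1}+\alpha(\beta+1)-\alpha(\beta-1)^2u^{-\beta}.
\end{align}
Since $A\ge0$ and $\beta>1$, both the first and the last term are nondecreasing in $u$, so $Q'(u)\ge Q'(1)$ for $u\ge 1$. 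Evaluating, I expect $Q'(1)=\beta\bigl(\alpha(\beta-1)+2\bigr)>0$. Therefore $Q$ is increasing on $[1,\infty)$, so $Q(u)\ge 2\beta>0$ there. This gives $P'(u)>0$, and hence $P(u)\ge P(1)=\beta^2>0$ for all $u\ge1$.

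The main obstacle is Step 3. A naive lower bound on $Q$ that drops the term $\alpha(\beta-1)u^{1-\beta}$ yields $\beta(2-\alpha)+\alpha$, which can be negative when $\alpha>2$. So that term must be kept, and one must pass to the second derivative. This works only because $A\ge0$ makes $2A\beta u^{\beta-1}$ nondecreasing. The remaining care is bookkeeping of the polynomial identities $P(1)=\beta^2$, $Q(1)=2\beta$ and $Q'(1)=\beta(\alpha(\beta-1)+2)$, which are routine expansions.
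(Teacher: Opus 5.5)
Your proof is correct and takes essentially the same route as the paper. In both, $A=\alpha\beta-2\alpha+1\ge 0$ is used only to make a second-order quantity monotone on $[1,\infty)$, and positivity then cascades from the values at $u=1$ ($P(1)=\beta^2$, $P'(1)=2\beta^2$). The paper differentiates $P'$ directly, writing $P''(u)=\beta u^{\beta-2}\bigl(2(2\beta-1)Au^{\beta}+\alpha\beta(\beta+1)u-2(2\alpha-1)(\beta-1)^2\bigr)$, whereas you first factor $P'=\beta u^{\beta-1}Q$ and then study $Q'$.

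One small improvement over the paper: your Step 1 actually derives $A\ge 0$ from the bound on $\alpha$, via $\nu=(\alpha-1)(\beta-1)$ and the positive root of $z^2-\nu z-\nu$. The paper only asserts it.
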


\begin{proof}
A direct calculation gives
\begin{align}
    P'(u) &= 2 \beta (\alpha \beta - 2 \alpha + 1) u^{2\beta - 1} + \alpha \beta (\beta + 1) u^{\beta} - 2 \beta (2\alpha - 1) (\beta - 1)u^{\beta - 1} + \alpha \beta (\beta - 1), \\
    P''(u) &= 2 \beta (2 \beta - 1) (\alpha \beta - 2 \alpha + 1) u^{2\beta - 2} +  \alpha \beta^2 (\beta + 1) u^{\beta - 1} - 2 \beta (2\alpha - 1) (\beta - 1)^2 u^{\beta - 2} \\
    &= \beta u^{\beta - 2} \prn*{2 (2 \beta - 1) (\alpha \beta - 2 \alpha + 1) u^{\beta} +  \alpha \beta (\beta + 1) u - 2 (2\alpha - 1) (\beta - 1)^2},
\end{align}
and
\begin{align}
    P(1) = \beta^2, \qquad
    P'(1) = 2 \beta^2, \qquad
    P''(1) = \beta^2 \prn*{\alpha (\beta - 1) + 2 \beta}.
\end{align}
Since $1 < \alpha \leq 1 + \frac{\mu + \sqrt{4 \lambda \mu + \mu^2}}{2 \lambda}$, we have $\alpha \beta - 2\alpha + 1 \ge 0$. Therefore, the function inside the parentheses in the expression for $P''(u)$ is increasing for $u \ge 1$, and since $P''(1)>0$, it follows that $P''(u)>0$ for all $u \ge 1$. Hence $P'(u)$ is increasing on $[1,\infty)$, and since $P'(1)>0$, we obtain $P'(u)>0$ for all $u \ge 1$. Finally, $P(1)>0$ implies $P(u)>0$ for all $u \ge 1$.
\end{proof}

\begin{lemma} \label{thm:h'(t)>0}
If $h(\tilde{t}) = 0$, then $h'(\tilde{t}) > 0$.
\end{lemma}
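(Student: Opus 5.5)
The plan is to differentiate $h$ explicitly and then use the relation $h(\tilde t)=0$ to remove the only non-algebraic quantity, $q(\tilde t)$. This should leave an algebraic inequality in $u \coloneqq p(\tilde t) > 1$, which I would then reduce to \cref{thm:P(u)>0}.

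\textbf{Differentiating $h$.} Using $p'(t) = -\lambda$ and $q'(t) = -\lambda\beta q(t)/(p(t)^\beta - 1)$ from the proof above, I get
\begin{align}
  h'(t) = \lambda(\alpha-1)(\beta-1)\prn*{p^{\beta-2} - p^{-2}} - \frac{\lambda\alpha\beta^2 q}{(p^\beta-1)(1-q)^2},
\end{align}
with $p = p(t)$ and $q = q(t)$. The first term is positive for $p>1$ and $\beta>1$. The second term is negative, so the whole difficulty is to control $q/(1-q)^2$.

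\textbf{Eliminating $q$.} At a zero $\tilde t$ of $h$, set
\begin{align}
  A \coloneqq (\alpha-1)\prn*{u^{\beta-1} + \frac{\beta-1}{u}}.
\end{align}
Then $h(\tilde t)=0$ says exactly $\frac{1}{1-q} = \frac{A}{\alpha\beta}$. Hence
\begin{align}
  \frac{q}{(1-q)^2} = \frac{1}{1-q}\prn*{\frac{1}{1-q} - 1} = \frac{A(A-\alpha\beta)}{(\alpha\beta)^2}.
\end{align}
Substituting this, and using $p^{\beta-2} - p^{-2} = (p^\beta-1)/p^2$, the condition $h'(\tilde t) > 0$ becomes, after multiplying by $u^2(u^\beta - 1)/\lambda > 0$,
\begin{align}
  (\alpha-1)(\beta-1)(u^\beta-1)^2 > u^2 A\,(A - \alpha\beta).
\end{align}
Here $u^2 A(A-\alpha\beta) = (\alpha-1)(u^\beta+\beta-1)\bigl((\alpha-1)(u^\beta+\beta-1) - \alpha\beta u\bigr)$. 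Since $0<q<1$ we also have $A > \alpha\beta$, and I may use this as an extra constraint.

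\textbf{Reduction to \cref{thm:P(u)>0}.} Expanding both sides gives a combination of the monomials $u^{2\beta}, u^{\beta+1}, u^\beta, u, 1$, which is exactly the family of monomials appearing in $P(u)$. I would first try to match the difference $\text{LHS} - \text{RHS}$, possibly after dividing by $(\alpha-1)$, to a positive multiple of $P(u)$. A quick look at the $u^{2\beta}$ coefficient gives $\beta - \alpha$ rather than $\alpha\beta - 2\alpha + 1$, so a direct identification probably fails.

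If so, I will use the constraint $A>\alpha\beta$, i.e.\ $(\alpha-1)(u^\beta+\beta-1) > \alpha\beta u$. The idea is to add a nonnegative multiple of this linear-in-$A$ inequality, for instance multiplied by a suitable factor such as $(u^\beta + \beta - 1)$ or $u^\beta$, to the difference so that the coefficients align with $P$. A natural route is to rewrite $A^2$ as $A\cdot A$ and bound one factor from below by $\alpha\beta$, choosing the weights so that the resulting lower bound is a positive multiple of $P(u)$. Positivity then follows from \cref{thm:P(u)>0}, which is where the hypothesis $\alpha \le 1 + \frac{\mu+\sqrt{4\lambda\mu+\mu^2}}{2\lambda}$, i.e.\ $\alpha\beta - 2\alpha + 1 \ge 0$, enters.

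This coefficient bookkeeping, namely finding the right combination of the identity from $h(\tilde t)=0$ and the constraint $q\in(0,1)$ that produces $P(u)$, is the step I expect to be the main obstacle. The differentiation and the elimination of $q$ are routine.
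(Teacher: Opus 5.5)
Your differentiation of $h$ and your elimination of $q$ through $\frac{1}{1-q}=\frac{A}{\alpha\beta}$ are exactly the paper's steps. The gap is an arithmetic slip in the next line, and it derails everything after it. The second term of $h'$ carries the factor $\lambda\alpha\beta^2\cdot\frac{1}{(\alpha\beta)^2}=\frac{\lambda}{\alpha}$, not $\lambda$. After multiplying by $u^2(u^\beta-1)/\lambda$, the condition $h'(\tilde t)>0$ is therefore
\[
\alpha(\alpha-1)(\beta-1)(u^\beta-1)^2 > u^2A(A-\alpha\beta).
\]
With the factor $\alpha$ restored, the direct identification works. Left minus right, divided by $\alpha-1$, equals $\alpha(\beta-1)(u^\beta-1)^2-(\alpha-1)(u^\beta+\beta-1)^2+\alpha\beta u(u^\beta+\beta-1)$. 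Its coefficients are:
\begin{itemize}
\item $\alpha\beta-2\alpha+1$ for $u^{2\beta}$;
\item $\alpha\beta$ for $u^{\beta+1}$;
\item $-2(2\alpha-1)(\beta-1)$ for $u^\beta$;
\item $\alpha\beta(\beta-1)$ for $u$;
\item $\alpha(\beta-1)-(\alpha-1)(\beta-1)^2=-(\beta-1)(\alpha\beta-2\alpha-\beta+1)$ for the constant.
\end{itemize}
That is exactly $P(u)$. Hence $h'(\tilde t)=\frac{\lambda(\alpha-1)}{\alpha u^2(u^\beta-1)}P(u)>0$ by \cref{thm:P(u)>0}, which is the paper's proof. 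No auxiliary constraint is needed.

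The fallback you describe would not have closed the gap. In the inequality you wrote (without $\alpha$), the $u^{2\beta}$ coefficient of left minus right is $(\alpha-1)(\beta-\alpha)$. This is negative whenever $\beta<\alpha$, which is compatible with the hypothesis $\alpha\beta-2\alpha+1\ge 0$ (e.g.\ $\alpha=3$, $\beta=2$, i.e.\ $\mu/\lambda=2$). So that inequality is false for all sufficiently large $u$, even though both $P(u)>0$ and $A>\alpha\beta$ hold there; no combination of these facts can yield it. There is also a structural problem. $A>\alpha\beta$ is a lower bound on $A$, and $A(A-\alpha\beta)$ is increasing in $A$ on that range. Inserting it (as in $A\cdot A\ge \alpha\beta A$) therefore bounds the subtracted term from below, which is the wrong direction for showing the left side dominates.
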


\begin{proof}
Differentiating $h(t)$ yields
\begin{align} \label{eq:diff h}
    h'(t) 
    &= \frac{(\alpha - 1) (\beta - 1)\lambda \prn*{p(t)^{\beta} - 1}}{p(t)^2} - \frac{\alpha \beta^2 \lambda q(t)}{\prn*{p(t)^{\beta} - 1} \prn*{1 - q(t)}^2}.
\end{align}
Now suppose that $h(\tilde{t})=0$. By \cref{eq:def h},
\begin{align}
    q(\tilde{t}) = \frac{(\alpha - 1)p(\tilde{t})^{\beta} + (\alpha - 1)(\beta - 1) - \alpha \beta p(\tilde{t})}{(\alpha - 1)p(\tilde{t})^{\beta} + (\alpha - 1)(\beta - 1)}.
\end{align}
Substituting this into \cref{eq:diff h}, we obtain
\begin{align}
    h'(\tilde{t})
    &= \frac{\lambda (\alpha - 1)}{\alpha p(\tilde{t})^2 \prn*{p(\tilde{t})^{\beta}-1}} P(p(\tilde{t})).
\end{align}
Since $p(\tilde{t}) \ge 1$, \cref{thm:P(u)>0} implies that $P(p(\tilde{t}))>0$, and therefore $h'(\tilde{t})>0$.
\end{proof}

We now return to the proof that $h(t)$ is nonnegative on $(0,t_{\sol}]$. Since
\begin{align}
    \lim_{t \to 0+} h(t) = +\infty,
\end{align}
we have $h(t)>0$ for sufficiently small $t>0$. Suppose for contradiction that there exists some $t \in (0,t_{\sol}]$ such that $h(t)<0$. Then, by continuity, there exists some $\tilde{t} \in (0,t_{\sol}]$ such that $h(\tilde{t})=0$. Choosing $\tilde{t}$ as the first such point after which $h$ becomes negative, we must have $h'(\tilde{t}) \le 0$, contradicting \cref{thm:h'(t)>0}. Therefore, $h(t)\ge 0$ on $(0,t_{\sol}]$.
\end{proof}

\subsection{Proof of \cref{thm:orsp fixed N}}
\begin{proof}
Since exploitative rewards are not allowed, when $R_i$ and $T_i$ are fixed at each stage $i$, \cref{thm:optimal goal non-exploitatie} implies that the optimal choice is
\begin{align}
\theta_i = R_i^{\frac{1}{\alpha}} F(T_i)^{\frac{\alpha-1}{\alpha}}. 
\end{align}
Therefore, the optimal reward-scheduling problem with fixed $N$ reduces to the following optimization problem over $\prn*{(T_i)_{i=1}^N, (R_i)_{i=1}^N}$:
\begin{align} \label{eq:optimization NTR}
    \max_{(T_i)_{i=1}^N,\,(R_i)_{i=1}^N}
    \quad & \sum_{i=1}^N R_i^{\frac{1}{\alpha}} F(T_i)^{\frac{\alpha-1}{\alpha}} \\
    \text{s.t.} \quad
    & \sum_{i=1}^N R_i = R, \quad R_i \geq 0 \quad (i = 1, \dots, N),\\
    & \sum_{i=1}^N T_i = T, \quad T_i \geq 0 \quad (i = 1, \dots, N). 
\end{align}
Furthermore, by Hölder's inequality,
\begin{align}
\sum_{i=1}^N R_i^{\frac{1}{\alpha}} F(T_i)^{\frac{\alpha-1}{\alpha}}
\leq
\prn*{\sum_{i=1}^N R_i}^{\frac{1}{\alpha}}
\prn*{\sum_{i=1}^N F(T_i)}^{\frac{\alpha - 1}{\alpha}}
=
R^{\frac{1}{\alpha}}
\prn*{\sum_{i=1}^N F(T_i)}^{\frac{\alpha - 1}{\alpha}}.
\end{align}
Since equality holds if and only if $R_i \propto F(T_i)$, the optimization problem \cref{eq:optimization NTR} can be further reduced to
\begin{align} \label{eq:optimization NT}
    \max_{(T_i)_{i=1}^N}
    \quad & \sum_{i=1}^N F(T_i)\\
    \text{s.t.} \quad 
    & \sum_{i=1}^N T_i = T, \quad T_i \geq 0 \quad (i = 1, \dots, N). 
\end{align}
To investigate the structure of the optimal solution to the optimization problem \cref{eq:optimization NT}, we use the following properties of the function $F$.
\begin{lemma} \label{thm:F concave}
    $F$ is continuously differentiable on $[0,\infty)$ and concave on $[0,\infty)$.
\end{lemma}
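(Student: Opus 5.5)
The plan is to make both branches of $F$ explicit, show that each branch is concave, and then check that the two branches join in a $C^1$ way at $T_{\sol} \coloneqq \frac{x_{\sol}^{\alpha-1}-1}{\lambda}$. A $C^1$ function that is concave on $[0,T_{\sol}]$ and on $[T_{\sol},\infty)$ is concave on $[0,\infty)$, since its derivative is then non-increasing on the whole half-line. Throughout I write $\beta \coloneqq \frac{\mu}{\lambda(\alpha-1)}+1$, $p \coloneqq 1+\lambda T$ and $c \coloneqq \frac{\alpha-1}{\mu+\lambda(\alpha-1)}$.

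\emph{First branch.} Using the closed form of $f_T(0)$ from the proof of \cref{thm:f_T decreasing}, and noting that $\frac{\mu}{\lambda(\alpha-1)}=\beta-1$, one gets
\begin{align}
F_1(T) = c\,(p^{\beta}-1)\,p^{-(\beta-1)} = c\,\bigl(p - p^{1-\beta}\bigr).
\end{align}
Hence
\begin{align}
F_1'(T) = c\lambda\bigl(1+(\beta-1)p^{-\beta}\bigr), \qquad F_1''(T) = -c\lambda^2\beta(\beta-1)p^{-\beta-1} < 0.
\end{align}
So $F_1$ is smooth and strictly concave on $[0,\infty)$, including the endpoint $T=0$.

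\emph{Second branch.} Put $t = T - T_{\sol}$ in \cref{eq:def f}. Then $1+\lambda(T-t) = x_{\sol}^{\alpha-1}$, so every factor of $f_T(t)$ except the exponential is a constant independent of $T$. After the substitution $u=T-s$ used in the proof of \cref{thm:f_T decreasing}, the exponential factor is $\exp(-\alpha\lambda\beta\, I(T))$ with
\begin{align}
I(T)=\int_{T_{\sol}}^{T}\frac{du}{(1+\lambda u)^{\beta}-1}.
\end{align}
Thus $F_2(T) = K\exp(\kappa I(T))$ with $K>0$ constant and $\kappa \coloneqq \frac{\alpha\lambda\beta}{\alpha-1}$. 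Writing $D \coloneqq p^{\beta}-1$, we get $F_2' = \kappa F_2/D$. Differentiating once more gives
\begin{align}
F_2'' = \frac{\kappa F_2}{D^2}\bigl(\kappa - \lambda\beta p^{\beta-1}\bigr) = \frac{\kappa\lambda\beta F_2}{D^2}\Bigl(\frac{\alpha}{\alpha-1} - p^{\frac{\mu}{\lambda(\alpha-1)}}\Bigr).
\end{align}
For $T \ge T_{\sol}$ we have $p \ge x_{\sol}^{\alpha-1}$, so $p^{\mu/(\lambda(\alpha-1))} \ge x_{\sol}^{\mu/\lambda}$. Now recall the proof of \cref{thm:shape of f_T(t)}: the auxiliary function $g(x)$ there is negative at $x=1$ and decreasing up to $(\frac{\alpha}{\alpha-1})^{\lambda/\mu}$. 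Therefore its unique root satisfies $x_{\sol} > (\frac{\alpha}{\alpha-1})^{\lambda/\mu}$, i.e.\ $x_{\sol}^{\mu/\lambda} > \frac{\alpha}{\alpha-1}$. This gives $F_2''<0$ on $[T_{\sol},\infty)$.

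\emph{Gluing at $T_{\sol}$.} At $T=T_{\sol}$ the point $t=T-T_{\sol}$ equals $0$, so $F_2(T_{\sol}) = f_{T_{\sol}}(0)^{-1/(\alpha-1)} = F_1(T_{\sol})$, and $F$ is continuous. The main obstacle is matching the derivatives. I would compare
\begin{align}
F_1'(T_{\sol}) = c\lambda\bigl(1+(\beta-1)x_{\sol}^{-(\alpha-1)\beta}\bigr) \quad\text{and}\quad F_2'(T_{\sol}) = \frac{\kappa F_1(T_{\sol})}{x_{\sol}^{(\alpha-1)\beta}-1},
\end{align}
and reduce their equality, after clearing denominators, to \cref{eq:x_0 equation} for $x_{\sol}$.

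As a fallback, there is an envelope interpretation. $F(T)^{-(\alpha-1)}$ equals the maximum of $f_T(t)$ over $t$, and at $T=T_{\sol}$ the maximizer is $t=0$, where $\partial_t \log f_T$ vanishes by \cref{eq:diff log f}. So differentiating $f_T(0)$ and $f_T(T-T_{\sol})$ in $T$ differ exactly by that vanishing term, which forces $F_1'(T_{\sol})=F_2'(T_{\sol})$.

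Either route shows that $F$ is $C^1$ with $F'$ non-increasing on $[0,\infty)$, which is concavity.
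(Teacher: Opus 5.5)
Your proposal is correct and follows essentially the same route as the paper. Both compute explicit closed forms for $F_1$ and $F_2$, match values and first derivatives at $\frac{x_{\sol}^{\alpha-1}-1}{\lambda}$ using the defining equation of $x_{\sol}$, and then check the signs of $F_1''$ and $F_2''$. Your sketched reduction of the derivative matching does go through, giving $F_1'=F_2'=\frac{\alpha}{\alpha-1}x_{\sol}^{-\mu/\lambda}$ at that point.

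The differences are minor and all valid:
\begin{itemize}
\item You obtain $x_{\sol}^{\mu/\lambda}>\frac{\alpha}{\alpha-1}$ from the monotonicity of the auxiliary function in the shape lemma for $f_T$, whereas the paper substitutes the equation for $x_{\sol}$.
\item You add an envelope-type check of the derivative matching.
\item You state explicitly why $C^1$ plus piecewise concavity gives global concavity.
\end{itemize}
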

\begin{proof}
We first show that $F$ is continuously differentiable on $[0, \infty)$. Since $F_1$ and $F_2$ are clearly continuous and differentiable for $T \geq 0$, it suffices to show continuity and differentiability at $T = \frac{x_{\sol}^{\alpha - 1}-1}{\lambda}$, that is,
\begin{align}
    F_1 \prn*{\frac{x_{\sol}^{\alpha - 1}-1}{\lambda}} = F_2 \prn*{\frac{x_{\sol}^{\alpha - 1}-1}{\lambda}}
\end{align}
and
\begin{align}
    F_1' \prn*{\frac{x_{\sol}^{\alpha - 1}-1}{\lambda}} = F_2' \prn*{\frac{x_{\sol}^{\alpha - 1}-1}{\lambda}}.
\end{align}
From \cref{eq:def f}, we obtain
\begin{align}
    F_1(T) 
    &= \frac{\alpha-1}{\mu + \lambda(\alpha - 1)} \prn*{1 + \lambda T - \prn*{1 + \lambda T}^{-\frac{\mu}{\lambda(\alpha - 1)}}}, \\
    F_1'(T) &= \frac{\lambda(\alpha-1)}{\mu + \lambda(\alpha - 1)} \prn*{1 + \frac{\mu}{\lambda (\alpha - 1)} \prn*{1 + \lambda T}^{-\frac{\mu}{\lambda(\alpha - 1)}-1}}, \\
    F_2(T) 
    &= \frac{(\alpha-1)\prn*{{x_{\sol}^{\alpha - 1}- x_{\sol}^{-\frac{\mu}{\lambda}}}}}{\mu + \lambda(\alpha - 1)} \exp \prn*{\frac{\alpha \prn*{\mu + \lambda (\alpha - 1)}}{(\alpha - 1)^2} \eta \prn*{T - \frac{x_{\sol}^{\alpha-1}-1}{\lambda}}}, \\
    F_2'(T) &= \frac{\alpha \prn*{{x_{\sol}^{\alpha - 1}- x_{\sol}^{-\frac{\mu}{\lambda}}}}}{(\alpha - 1) \prn*{\prn*{1 + \lambda T}^{\frac{\mu}{\lambda(\alpha -1)} + 1}-1}} \exp \prn*{\frac{\alpha \prn*{\mu + \lambda (\alpha - 1)}}{(\alpha - 1)^2} \eta \prn*{T - \frac{x_{\sol}^{\alpha-1}-1}{\lambda}}}.
\end{align}

Now,
\begin{align}
    F_1 \prn*{\frac{x_{\sol}^{\alpha - 1}-1}{\lambda}} &= \frac{\alpha-1}{\mu + \lambda(\alpha - 1)} \prn*{x_{\sol}^{\alpha - 1} - x_{\sol}^{-\frac{\mu}{\lambda}}},\\
    F_2 \prn*{\frac{x_{\sol}^{\alpha - 1}-1}{\lambda}} &= \frac{\alpha-1}{\mu + \lambda(\alpha - 1)} \prn*{x_{\sol}^{\alpha - 1} - x_{\sol}^{-\frac{\mu}{\lambda}}},
\end{align}
so
\begin{align}
    F_1 \prn*{\frac{x_{\sol}^{\alpha - 1}-1}{\lambda}} = F_2 \prn*{\frac{x_{\sol}^{\alpha - 1}-1}{\lambda}}.
\end{align}
Moreover,
\begin{align}
    F_1' \prn*{\frac{x_{\sol}^{\alpha - 1}-1}{\lambda}} &= \frac{\lambda(\alpha-1)}{\mu + \lambda(\alpha - 1)} \prn*{1 + \frac{\mu}{\lambda (\alpha - 1)} x_{\sol}^{-\frac{\mu}{\lambda} - \alpha + 1}},\\
    F_2' \prn*{\frac{x_{\sol}^{\alpha - 1}-1}{\lambda}} &= \frac{\alpha}{\alpha - 1} x_{\sol}^{-\frac{\mu}{\lambda}},
\end{align}
and therefore
\begin{align}
    &F_1' \prn*{\frac{x_{\sol}^{\alpha - 1}-1}{\lambda}} - F_2' \prn*{\frac{x_{\sol}^{\alpha - 1}-1}{\lambda}} \\
    &= \frac{\lambda}{\mu + \lambda (\alpha - 1)} x_{\sol}^{-\frac{\mu}{\lambda} - \alpha + 1} \prn*{(\alpha - 1) x_{\sol}^{\frac{\mu}{\lambda} + \alpha - 1} - \alpha \prn*{\frac{\mu}{\lambda} \frac{1}{\alpha - 1} + 1}x_{\sol}^{\alpha - 1} + \frac{\mu}{\lambda}} \\
    &= 0.
\end{align}
Here we used the fact that $x_{\sol}$ is a solution to \cref{eq:x_0 equation}. 
This indicates
\begin{align}
    F_1' \prn*{\frac{x_{\sol}^{\alpha - 1}-1}{\lambda}} = F_2' \prn*{\frac{x_{\sol}^{\alpha - 1}-1}{\lambda}}.
\end{align}
and that $F$ is continuously differentiable on $[0, \infty)$.

Next, we show that $F$ is concave. For $F_1(T)$,
\begin{align}
    F_1''(T) = - \frac{\mu}{\alpha - 1} \prn*{1 + \lambda T}^{-\frac{\mu}{\lambda(\alpha - 1)}-2},
\end{align}
and hence $F_1(T)$ is concave for $T \geq 0$. Next, for $F_2(T)$, we compute
\begin{align}
    F_2''(T) 
    &= \frac{\alpha \prn*{x_{\sol}^{\alpha - 1}- x_{\sol}^{-\frac{\mu}{\lambda}}} \prn*{\mu + \lambda (\alpha - 1)}}{(\alpha - 1)^2} \exp \prn*{\frac{\alpha \prn*{\mu + \lambda (\alpha - 1)}}{(\alpha - 1)^2} \eta \prn*{T - \frac{x_{\sol}^{\alpha-1}-1}{\lambda}}} \\
    &\qquad \cdot \frac{\frac{\alpha}{\alpha - 1} - (1 + \lambda T)^{\frac{\mu}{\lambda(\alpha - 1)}}}{\prn*{\prn*{1 + \lambda T}^{\frac{\mu}{\lambda(\alpha -1)} + 1}-1}^2}.
\end{align}
Thus, it suffices to show that
\begin{align}
    \frac{\alpha}{\alpha - 1} - (1 + \lambda T)^{\frac{\mu}{\lambda(\alpha - 1)}} < 0.
\end{align}
When $T > \frac{x_{\sol}^{\alpha - 1} - 1}{\lambda}$, we have $(1 + \lambda T)^{\frac{1}{\alpha - 1}} > x_{\sol}$. Moreover, since
\begin{align}
    (\alpha - 1) x_{\sol}^{\frac{\mu}{\lambda} + \alpha - 1} - \alpha \prn*{\frac{\mu}{\lambda} \frac{1}{\alpha - 1} + 1}x_{\sol}^{\alpha - 1} + \frac{\mu}{\lambda} = 0,
\end{align}
it follows that
\begin{align}
    x_{\sol}^{\frac{\mu}{\lambda}} = \frac{\alpha \prn*{\mu + \lambda (\alpha - 1)}}{\lambda(\alpha - 1)^2} - \frac{\mu}{\lambda (\alpha - 1)x_{\sol}^{\alpha - 1}}.
\end{align}
Therefore,
\begin{align}
     \frac{\alpha}{\alpha - 1} - (1 + \lambda T)^{\frac{\mu}{\lambda(\alpha - 1)}} &< \frac{\alpha}{\alpha - 1} - x_{\sol}^{\frac{\mu}{\lambda}} \\
     &= \frac{\alpha}{\alpha - 1} - \frac{\alpha \prn*{\mu + \lambda (\alpha - 1)}}{\lambda(\alpha - 1)^2} + \frac{\mu}{\lambda (\alpha - 1)x_{\sol}^{\alpha - 1}}  \\
     &= -\frac{\alpha \mu}{\lambda (\alpha - 1)^2} + \frac{\mu}{\lambda (\alpha - 1)x_{\sol}^{\alpha - 1}} \\
     &= \frac{\mu}{\lambda (\alpha - 1)^2 x_{\sol}^{\alpha - 1}} \prn*{\alpha \prn*{1 - x_{\sol}^{\alpha - 1}} - 1} < 0.
\end{align}
This proves that $F_2(T)$ is concave when $T > \frac{x_{\sol}^{\alpha - 1} - 1}{\lambda}$.
\end{proof}
By the concavity of $F$ established in \cref{thm:F concave} and Jensen's inequality,
\begin{align}
    \sum_{i=1}^N F(T_i)
    = N \sum_{i=1}^N \frac{1}{N} F(T_i)
    \leq N \cdot F \prn*{\sum_{i=1}^N \frac{1}{N} T_i}
    = N \cdot F \prn*{\frac{T}{N}}.
\end{align}
Equality holds if and only if $T_i = \frac{T}{N}$ for all $i$.
\end{proof}

\subsection{Proof of \cref{thm:G(N)}}
\begin{proof}
Let $G_1(N) \coloneqq N \cdot F_1 \prn*{\frac{T}{N}}$ and $G_2(N) \coloneqq N \cdot F_2 \prn*{\frac{T}{N}}$. It suffices to show that $G_1$ is monotonically increasing when $\frac{T}{N} \leq \frac{x_{\sol}^{\alpha - 1}-1}{\lambda}$, and that $G_2$ is monotonically increasing when $\frac{T}{N} > \frac{x_{\sol}^{\alpha - 1}-1}{\lambda}$.

We first consider $G_1$. We have
\begin{align}
    G_1(N) &= \frac{\alpha - 1}{\mu + \lambda (\alpha - 1)} \prn*{N + \lambda T - N \prn*{1 + \frac{\lambda T}{N}}^{-\frac{\mu}{\lambda (\alpha - 1)}}}, \\
    G_1'(N) &= \frac{\alpha - 1}{\mu + \lambda (\alpha - 1)} \prn*{1 - \prn*{1 + \frac{\lambda T}{N}}^{-\frac{\mu}{\lambda (\alpha - 1)}} - \frac{\mu T}{(\alpha - 1)N} \prn*{1 + \frac{\lambda T}{N}}^{-\frac{\mu}{\lambda (\alpha - 1)}-1}}.
\end{align}
Now let
\begin{align}
    x \coloneqq 1 + \frac{\lambda T}{N}, \qquad \nu \coloneqq \frac{\mu}{\lambda (\alpha - 1)}, \qquad h(x) \coloneqq x^{-\nu}.
\end{align}
Then
\begin{align}
    G_1'(N) &= \frac{\alpha - 1}{\mu + \lambda (\alpha - 1)} \prn*{1 - x^{-\nu} - \nu x^{-\nu - 1} (x - 1)} \\
    &= \frac{\alpha - 1}{\mu + \lambda (\alpha - 1)} \prn*{h(1) - h(x) + h'(x) (x - 1)} \\
    &= \frac{\alpha - 1}{\mu + \lambda (\alpha - 1)} (x - 1) \prn*{h'(x) - \frac{h(x) - h(1)}{x - 1}}.
\end{align}
By the mean value theorem, there exists some $\tilde{x}$ with $1 < \tilde{x} < x$ such that
\begin{align}
    h'(\tilde{x}) = \frac{h(x) - h(1)}{x - 1}.
\end{align}
Therefore,
\begin{align}
    G_1'(N)
    &= \frac{\alpha - 1}{\mu + \lambda (\alpha - 1)} (x - 1) \prn*{h'(x) - h'(\tilde{x})}.
\end{align}
Since $h$ is convex, we have $h'(x) - h'(\tilde{x}) > 0$. Hence $G_1'(N) > 0$, and therefore $G_1(N)$ is monotonically increasing.

Next, we consider $G_2$. We have
\begin{align}
    G_2(N) 
    &= N \cdot \frac{(\alpha-1)\prn*{{x_{\sol}^{\alpha - 1}- x_{\sol}^{-\frac{\mu}{\lambda}}}}}{\mu + \lambda(\alpha - 1)} \\
    &\cdot \exp \prn*{\frac{\alpha \prn*{\mu + \lambda (\alpha - 1)}}{(\alpha - 1)^2} \int_{0}^{\frac{T}{N} - \frac{x_{\sol}^{\alpha-1}-1}{\lambda}} \frac{ds}{\prn*{1 + \lambda \prn*{\frac{T}{N}-s}}^{\frac{\mu}{\lambda (\alpha - 1)}+1}-1}} \\
    &= N \cdot \frac{(\alpha-1)\prn*{{x_{\sol}^{\alpha - 1}- x_{\sol}^{-\frac{\mu}{\lambda}}}}}{\mu + \lambda(\alpha - 1)} \exp \prn*{\frac{\alpha \prn*{\mu + \lambda (\alpha - 1)}}{(\alpha - 1)^2} \int_{\frac{x_{\sol}^{\alpha - 1}-1}{\lambda}}^{\frac{T}{N}} \frac{ds'}{\prn*{1 + \lambda s'}^{\frac{\mu}{\lambda (\alpha - 1)}+1}-1}},
\end{align}
where in the second equality we used the change of variables $s' \coloneqq \frac{T}{N} - s$. Hence
\begin{align}
    \log G_2(N) &= \log N + \frac{\alpha \prn*{\mu + \lambda (\alpha - 1)}}{(\alpha - 1)^2} \int_{\frac{x_{\sol}^{\alpha - 1}-1}{\lambda}}^{\frac{T}{N}} \frac{ds'}{\prn*{1 + \lambda s'}^{\frac{\mu}{\lambda (\alpha - 1)}+1}-1} + \mathrm{const.},
\end{align}
and therefore
\begin{align}
    (\log G_2(N))' &= \frac{1}{N} + \frac{\alpha \prn*{\mu + \lambda (\alpha - 1)}}{(\alpha - 1)^2} \prn*{-\frac{T}{N^2}} \frac{1}{\prn*{1 + \frac{\lambda T}{N}}^{\frac{\mu}{\lambda (\alpha - 1)}+1}-1} \\
    &= \frac{(\alpha - 1)^2 N \bra*{ \prn*{1 + \frac{\lambda T}{N}}^{\frac{\mu}{\lambda (\alpha - 1)}+1}-1} - \alpha \prn*{\mu + \lambda (\alpha - 1)}T}{N^2 (\alpha - 1)^2 \prn*{\prn*{1 + \frac{\lambda T}{N}}^{\frac{\mu}{\lambda (\alpha - 1)}+1}-1}}.
\end{align}
Thus, defining
\begin{align}
    g(N) \coloneqq (\alpha - 1)^2 N \bra*{ \prn*{1 + \frac{\lambda T}{N}}^{\frac{\mu}{\lambda (\alpha - 1)}+1}-1} - \alpha \prn*{\mu + \lambda (\alpha - 1)}T,
\end{align}
it suffices to show that $g(N) \geq 0$ when
\begin{align}
    \frac{T}{N} > \frac{x_{\sol}^{\alpha - 1}-1}{\lambda}
    \quad \Leftrightarrow \quad
    0 < N < \frac{\lambda T}{x_{\sol}^{\alpha - 1} - 1}.
\end{align}
Let
$
    x \coloneqq \prn*{1 + \frac{\lambda T}{N}}^{\frac{1}{\alpha - 1}}.
$
Then
\begin{align}
    g'(N) &= (\alpha - 1) \prn*{-\frac{\mu}{\lambda} x^{\frac{\mu}{\lambda} + \alpha - 1} + \prn*{\frac{\mu}{\lambda}+ \alpha - 1} x^{\frac{\mu}{\lambda}}- \alpha + 1}.
\end{align}
Define
\begin{align}
    \tilde{g}(x) \coloneqq -\frac{\mu}{\lambda} x^{\frac{\mu}{\lambda} + \alpha - 1} + \prn*{\frac{\mu}{\lambda}+ \alpha - 1} x^{\frac{\mu}{\lambda}}- \alpha + 1.
\end{align}
Then
\begin{align}
    \tilde{g}'(x) = \frac{\mu}{\lambda} \prn*{\frac{\mu}{\lambda} + \alpha - 1} x^{\frac{\mu}{\lambda}-1} \prn*{1 - x^{\alpha - 1}} < 0,
\end{align}
so $\tilde{g}(x)$ is strictly decreasing. Since $\tilde{g}(1) = 0$ and $x \geq x_{\sol} > 1$, it follows that $\tilde{g}(x) < 0$. Therefore, $g'(N) < 0$, and hence $g(N)$ is strictly decreasing on $0 < N < \frac{\lambda T}{x_{\sol}^{\alpha - 1} - 1}$.
Because
\begin{align}
    g\prn*{\frac{\lambda T}{x_{\sol}^{\alpha - 1} - 1}} &= (\alpha - 1)^2 \frac{\lambda T}{x_{\sol}^{\alpha - 1}-1} \prn*{x_{\sol}^{\frac{\mu}{\lambda} + \alpha - 1} - 1} - \alpha \prn*{\mu + \lambda (\alpha - 1)}T \\
    &= \frac{(\alpha - 1) \lambda T}{x_{\sol}^{\alpha-1} - 1} \\
    &\cdot \bra*{\prn*{(\alpha - 1) x_{\sol}^{\frac{\mu}{\lambda} + \alpha - 1} - \alpha \prn*{\frac{\mu}{\lambda} \frac{1}{\alpha - 1} + 1}x_{\sol}^{\alpha - 1} + \frac{\mu}{\lambda}} + \frac{\mu + \lambda (\alpha - 1)}{\lambda (\alpha - 1)}} \\
    &= \frac{\prn*{\mu + \lambda (\alpha - 1)} T}{x_{\sol}^{\alpha-1} - 1} > 0, 
\end{align}
we conclude that $g(N) > 0$ for all
$
    0 < N < \frac{\lambda T}{x_{\sol}^{\alpha - 1} - 1}.
$
This proves that $G_2(N)$ is monotonically increasing on $0 < N \leq \frac{\lambda T}{x_{\sol}^{\alpha - 1} - 1}$.

Finally,
\begin{align}
\lim_{N \to \infty} G(N) &= R^{\frac{1}{\alpha}} \prn*{\lim_{N \to \infty} N \cdot F \prn*{\frac{T}{N}}}^{\frac{\alpha - 1}{\alpha}} \\
&= R^{\frac{1}{\alpha}} \prn*{\lim_{N \to \infty} G_1(N)}^{\frac{\alpha - 1}{\alpha}},
\end{align}
and
\begin{align}
    \lim_{N \to \infty} G_1(N)
    &= \frac{\alpha - 1}{\mu + \lambda (\alpha - 1)} \lim_{N \to \infty} \prn*{N + \lambda T - N \prn*{1 + \frac{\lambda T}{N}}^{-\frac{\mu}{\lambda (\alpha - 1)}}} \\
    &= \frac{\alpha - 1}{\mu + \lambda (\alpha - 1)} \lim_{N \to \infty} \prn*{N + \lambda T - N \prn*{\sum_{k=0}^{\infty} \binom{-\frac{\mu}{\lambda (\alpha - 1)}}{k} \prn*{\frac{\lambda T}{N}}^k}} \\
    &= \frac{\alpha - 1}{\mu + \lambda (\alpha - 1)} \lim_{N \to \infty} \prn*{N + \lambda T - N \prn*{1 - \frac{\mu T}{(\alpha - 1)N} + O(N^{-2})}} \\
    &= \frac{\alpha - 1}{\mu + \lambda (\alpha - 1)} \cdot \frac{\mu + \lambda (\alpha - 1)}{\alpha - 1} T \\
    &= T.
\end{align}
In the above derivation, we used the generalized binomial theorem \cite{graham1994concrete}:
\begin{align}
    (1 + z)^r = \sum_{k=0}^{\infty} \binom{r}{k} z^k
\end{align}
for $\abs{z} < 1$. 
Therefore,
$
    \lim_{N \to \infty} G(N) = R^{\frac{1}{\alpha}} T^{\frac{\alpha - 1}{\alpha}}.
$
\end{proof}

\section*{Declaration of Generative AI Use}
The authors used generative AI tools to assist with language editing and improving the clarity of the manuscript. The authors reviewed and edited all AI-assisted outputs and take full responsibility for the content of the paper.

\bibliographystyle{elsarticle-harv} 
\bibliography{main}

\end{document}